%% file: main.tex
\documentclass[runningheads]{llncs}
\usepackage[T1]{fontenc}
\usepackage{graphicx}
\usepackage{amsmath}
\usepackage{amssymb}

\usepackage{tcolorbox}
\usepackage{tikz}

\usepackage[colorlinks=true,citecolor=blue]{hyperref}

\usetikzlibrary{decorations.pathreplacing,calligraphy}

\usepackage{xcolor}

\definecolor{mysoftred}{RGB}{255,117,101}
\definecolor{mysoftyellow}{RGB}{255,255,101}
\definecolor{mysoftgreen}{RGB}{117,255,101}

\newcommand{\sd}{{\sc Signed Domination}}

\newcommand{\ds}{{\sc Dominating Set}}

\newcommand{\mrss}{{\sc Multidimensional Relaxed Subset Sum}}

\begin{document}
\title{On the Complexity of Signed Domination\thanks{Extended abstract of this paper has appeared in IWOCA 2026~\cite{reddy}.}}
%
%
\author{Sangam Balchandar Reddy\orcidID{0000-0002-5848-3821}}
\authorrunning{S. B. Reddy}
%
\institute{School of Computer and Information Sciences, \\ University of Hyderabad, Telangana, India \vspace{3mm} \\
\email{21mcpc14@uohyd.ac.in}}
\maketitle              
\begin{abstract}
Given a graph $G = (V, E)$, a signed dominating function is a function $f: V \rightarrow \{-1, 1\}$ such that for every vertex $u \in V$, $\sum\limits_{v \in N[u]} f(v) \geq 1$. The weight of $f$ is defined as $\sum\limits_{u \in V} f(u)$. The objective of the \sd{} problem is to compute a signed dominating function $f$ of minimum weight. The problem is known to be NP-complete even when restricted to bipartite, chordal, and planar graphs. In this paper, we extend the known complexity results for the \sd{} problem. Since the problem is NP-complete on chordal graphs, we study its complexity on split graphs, a subclass of chordal graphs, and show that it remains NP-complete. Moreover, as the problem is W[2]-hard parameterized by weight, we investigate its parameterized complexity with respect to structural parameters. We prove that the problem is W[1]-hard when parameterized by feedback vertex set number (and hence by treewidth and clique-width). Motivated by this hardness result, we consider more restrictive parameters, neighbourhood diversity and twin cover number, and present FPT algorithms.
\keywords{Signed domination \and NP-complete \and W[1]-hard \and FPT \and Split graphs \and Feedback vertex set number \and Neighbourhood diversity \and Twin cover number}
\end{abstract}
 \input{1_introduction}
 \input{2_preliminaries}
 \input{3_split}
 \input{4_fvs}
 \input{5_nd}
 \input{6_tc}
 \input{7_conclusion}
 \bibliographystyle{plain}
\bibliography{references}
\end{document}

%% file: 1_introduction.tex
\section{Introduction}

The \ds{} (DS) problem is one of the earliest problems shown to be NP-complete by \cite{karp}. Given a graph $G = (V, E)$, a set $S \subseteq V$ is a dominating set if for each vertex $u \in V\setminus S$, $|N(u) \cap S| \geq 1$. The goal of the DS problem is to compute a dominating set of minimum size.

Since its inception, numerous variants of the DS problem have been introduced. One such variant is the \sd{} (SD) problem. The concept of signed domination was introduced by \cite{dunbar1995signed}. Signed domination generalizes classical domination by allowing vertices to contribute either positively or negatively to the domination condition, thereby providing greater modeling flexibility. A signed dominating function on graph $G$ is a function $f: V \rightarrow{} \{-1, 1\}$ such that for each vertex $u \in V$, $\sum\limits_{v \in N[u]}f(v) \geq 1$. The weight of a signed dominating function $f$ is $\sum\limits_{u \in V} f(u)$. The minimum weight over all signed dominating functions for $G$ is denoted by $\gamma_{s}(G)$. The objective of the SD problem is to compute $\gamma_{s}(G)$.

\medskip
 We formally define the decision version of the SD problem as follows.
\begin{tcolorbox}
{
\sd{}: \newline
\textit{Input:} An instance $I$ = $(G, k)$, where $G=(V,E)$ is an undirected graph and an integer $k$.\newline
\textit{Output:} YES, if there exists a function $f: V \rightarrow{} \{-1,1\}$ such that $\sum\limits_{u \in V} f(u) \leq k$ and $\sum\limits_{v \in N[u]} f(v) \geq 1$ for each vertex $u \in V$; NO otherwise.
}
\end{tcolorbox}
\cite{henning1995algorithmic} proved that the SD problem is NP-complete on bipartite, and chordal graphs. They also showed that the problem admits a linear-time algorithm on trees. \cite{damaschke2001minus} then showed that the problem remains NP-complete on planar graphs with maximum degree three. Further, \cite{lee2008variations} proved that the problem is NP-complete even on doubly chordal graphs, planar bipartite graphs and chordal bipartite graphs. Later, \cite{zheng2013kernelization} showed that the problem is NP-complete on subcubic grid graphs. They also presented a kernel of size $\frac{k^2+k}{2}$ for general graphs, where $k$ is the number of vertices with label $1$. Recently, \cite{lin2015algorithms} proved that the problem is W[2]-hard parameterized by weight. They also presented constant-factor approximation algorithms for subcubic graphs, graphs of maximum degree four and graphs of maximum degree five. In addition, they proved that the problem is FPT on subcubic graphs.

\medskip
Several bounds have been established for the SD problem on various graph classes. In particular, \cite{dunbar1995signed} derived both upper and lower bounds on $\gamma_{s}(G)$. For $r$-regular graphs, \cite{henning1996inequalities} showed that $\gamma_{s}(G) \geq \frac{n}{r+1}$ when $r$ is even, and $\gamma_{s}(G) \geq \frac{2n}{r+1}$ when $r$ is odd. \cite{favaron1996signed} subsequently provided upper bounds, proving that $\gamma_{s}(G) \leq n \cdot \frac{(r+1)^2}{r^2+4r-1}$ for odd $r$, and $\gamma_{s}(G) \leq n \cdot \frac{r+1}{r+3}$ for even $r$. For general graphs, \cite{zhang1999note} established the lower bound $\gamma_{s}(G) \geq 2\left\lceil \frac{-1+\sqrt{1+8n}}{2} \right\rceil - n$.

\medskip
Signed domination has applications in systems where elements exert both positive and negative influence, making it useful in real-world network analysis. For instance, in sensor networks, nodes can be in active (+1) or inactive ($-$1) states, and the goal is to ensure that every node receives the positive support from its neighbours. This framework helps in designing efficient and fault-tolerant systems by minimizing resource usage while maintaining overall network stability. Similar interpretations arise in social and biological networks, where entities may promote or inhibit activity, and signed domination provides a mathematical tool to analyze and optimize such balanced interactions. 
\medskip 

\noindent 
\textbf{Our Results.}
 In this paper, we investigate the SD problem in the realm of classical and parameterized complexity, and obtain the following results. 
 \begin{itemize}
    \item We analyze the complexity of the problem on split graphs (a subclass of chordal graphs) and prove that it remains NP-complete.
     \item We then show that the problem is W[1]-hard parameterized by feedback vertex set number. From this result, it follows that the problem is W[1]-hard parameterized by treewidth or clique-width.
     \item Since the problem is W[1]-hard parameterized by clique-width, we further investigate its parameterized complexity for larger parameters neighbourhood diversity and twin cover number, and present FPT algorithms. The relationship between various graph parameters can be seen in Fig. \ref{fig: Fig1a}.
 \end{itemize}
 \begin{figure} [t]
\centering
    \begin{tikzpicture} [thick,scale=1.4, every node/.style={scale=0.8}]
        
        \draw[thin, <-] (-1.5, 0) -- (0, 0.65); 
        \draw[thin, <-] (0, 1.3) -- (0, 1.95); 
        \draw[thin, <-] (0, 2.35) -- (0, 2.95); 
        \draw[thin, <-] (0, 3.35) -- (-1.5, 4); 

        \draw[thin, <-] (2, 1.3) -- (2, 1.9); 
        \draw[thin, <-] (2, 2.55) -- (0.5, 2.95); 
        \draw[thin, <-] (-2, 0) -- (-2, 0.65);
         
        \draw[thin, <-] (-2.5, 0) -- (-4.5, 0.85);
        \draw[thin, ->] (-4.5, 1.9) -- (-4.5, 1.25);

        \draw[thin, <-] (-2, 1.4) -- (-2, 1.9); 
        \draw[thin, <-] (-2, 2.35) -- (-2, 4); 
        \draw[thin, <-] (-4, 2.35) -- (-2.5, 4); 
        \draw[thin, ->] (-2.5, 1.9) -- (-4, 1.25); 
        \draw[thin, ->] (1.5, 1.9) -- (0.5, 1.3);

        \draw[thin, fill=mysoftgreen] (-2.8, -0.55) rectangle (-1.2, 0);
        \draw[thin, fill=mysoftyellow] (-0.8, 1.3) rectangle (0.8, 0.65);
        \draw[thin, fill=mysoftyellow] (-0.8, 1.95) rectangle (0.8, 2.35);
        \draw[thin, fill=mysoftred] (-0.8, 2.95) rectangle (0.8, 3.35);
        
        \draw[thin, fill=mysoftred] (-2.8, 4) rectangle (-1.2, 4.4);
        \draw[thin, fill=mysoftgreen] (-2.9, 1.4) rectangle (-1.1, 0.65);
        \draw[thin, fill=mysoftyellow] (-2.9, 1.9) rectangle (-1.1, 2.35);
        \draw[thin, fill=mysoftyellow] (1.15, 1.3) rectangle (2.85, 0.65);
        \draw[thin, fill=mysoftred] (1.1, 1.9) rectangle (2.9, 2.55);

        \draw[thin, fill=mysoftgreen] (-5.5, 1.25) rectangle (-3.2, 0.8);
        \draw[thin, fill=mysoftyellow] (-5.5, 1.9) rectangle (-3.2, 2.35);

        \draw (-2, 4) circle (0cm) node[anchor=south]{Clique-width};
        \draw (0, 3) circle (0cm) node[anchor=south]{Treewidth};
        \draw (0, 2) circle (0cm) node[anchor=south]{Pathwidth};
        \draw (0, 1) circle (0cm) node[anchor=south]{Distance to};
        \draw (0, 0.65) circle (0cm) node[anchor=south]{disjoint paths};
        \draw (-2, -0.3) circle (0cm) node[anchor=south]{Vertex cover};
        \draw (-2, -0.55) circle (0cm) node[anchor=south]{number};

        \draw (-2, 2) circle (0cm) node[anchor=south]{Modular-width};
        \draw (-2, 1) circle (0cm) node[anchor=south]{Neighbourhood};
        \draw (-2, 0.7) circle (0cm) node[anchor=south]{diversity};

        \draw (-4.3, 2) circle (0cm) node[anchor=south]{Distance to cluster};
        \draw (-4.3, 0.9) circle (0cm) node[anchor=south]{Twin cover number};
        
        \draw (2, 2.2) circle (0cm) node[anchor=south]{Feedback vertex};
        \draw (2, 1.95) circle (0cm) node[anchor=south]{set number};
        \draw (2, 0.9) circle (0cm) node[anchor=south]{Feedback edge};
        \draw (2, 0.7) circle (0cm) node[anchor=south]{set number};

    \end{tikzpicture}
    \caption{Hasse diagram representing the relation between various structural parameters. A directed edge from a parameter $k_1$ to $k_2$ indicates that $k_1 \leq f(k_2)$ for some computable function $f$. Parameters marked by green are proved to be FPT. Parameters marked by red are proved to be W[1]-hard and the complexity of the parameter marked by yellow is unknown.}
    \label{fig: Fig1a}
\end{figure}

%% file: 2_preliminaries.tex
 \section{Preliminaries}
 \noindent Consider a graph $G = (V, E)$ with $V$ as the vertex set and $E$ as the edge set. For a vertex $u \in V$, the open neighbourhood of $u$ is denoted by $N(u) = \{v: (u,v) \in E\}$ and the closed neighbourhood of $u$ is denoted by $N[u] = N(u) \cup \{u\}$. The open neighbourhood of a set $T \subseteq V$ is denoted by $N(T) = (\bigcup\limits_{u \in T}^{} N(u)) \setminus T$ and the closed neighbourhood by $N[T] = N(T) \cup T$. The degree of a vertex $u$ is represented by $d(u)$ and $d(u) = |N(u)|$. A graph $G$ is $r$-regular if $d(u) = r$ for every vertex $u \in V$. A cubic graph is a 3-regular graph. A vertex of degree one is called a pendant vertex. A path of length $l$ refers to a path with $l$ edges and $l+1$ vertices.

Let $f$ be a signed dominating function. For a vertex $u \in V$, the value of $f(u)$ is also called the \textbf{label} of $u$. For a vertex $u$, \textbf{labelSum} is defined as the sum of the labels of all the vertices in its closed neighbourhood, i.e., $\sum\limits_{v \in N[u]} f(v)$. For a set $T$, \textbf{weight} of $T$ is the sum of the labels of all the vertices in the set, i.e., $\sum\limits_{u \in T} f(u)$. 
Apart from this, standard notations as defined in \cite{WEST} are used.
\medskip

\noindent \textit{Parameterized Complexity.} A problem is said to be fixed-parameter tractable (FPT) with respect to a parameter $k$, if it can be solved by an algorithm with running time $\mathcal{O}(f(k) \cdot n^{\mathcal{O}(1)})$, where $f$ is a computable function and $n$ is the size of the input. The notation $\mathcal{O}^*(f(k))$ is used to suppress polynomial factors in $n$ and to denote running times of this form. A problem is W[$\star$]-hard with respect to a parameter if it is believed not to be fixed-parameter tractable. In parameterized complexity, there exists a hierarchy of complexity classes: FPT $\subseteq$ W[1] $\subseteq$ W[2] $\subseteq$ ... $\subseteq$ XP. For more information on \textit{parameterized complexity}, the reader is referred to \cite{MC}.
\medskip

\noindent We now define the class of graphs used in this paper. 
\medskip

\noindent \emph{Split graphs.}
    A graph $G = (V, E)$ is a split graph if the vertex set $V$ can be partitioned into two sets $V_1$ and $V_2$ such that the subgraph induced by $V_1$ is a clique and the subgraph induced by $V_2$ is an independent set in $G$. 
\medskip

\noindent Next, we define the graph parameters used in this paper.
\medskip

\noindent \emph{Feedback vertex set number.}
     For a graph $G = (V, E)$, the feedback vertex set number is the cardinality of a smallest vertex subset $F \subseteq V$ such that $G \setminus F$ is a disjoint union of trees.
\medskip

\noindent \emph{Neighbourhood diversity.}
     Let $G = (V, E)$ be a graph. Two vertices $u$ and $v$ are of the same type if and only if $N(u) \setminus \{v\} = N(v) \setminus \{u\}$. The neighbourhood diversity of $G$ is the smallest value of $t$ for which there exists a partition of $V$ into $t$ sets $V_1, V_2, ..., V_t$ such that all the vertices in each set are of the same type.
\medskip

\noindent \emph{Twin cover number.} For a graph $G = (V, E)$, an edge $(u, v) \in E$ is a twin edge of $G$ if $N_G[u] = N_G[v]$. The parameter \textit{twin cover number} is the cardinality of the smallest set $T \subseteq V$ such that every edge in $G$ is either a twin edge or incident to a vertex in $T$.
\medskip

 \indent A key result used in Section~\ref{sectionhard} of this paper is given as follows.
 \begin{lemma} \label{obs: O1} 
    Let $f$ be a signed dominating function. If $u \in V$ with $d(u) = 1$ and $v \in N(u)$, then $f(u) = f(v) = 1$. 
    \end{lemma}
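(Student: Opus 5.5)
The plan is to read the claim off the signed domination condition at the pendant vertex $u$ alone. Since $d(u) = 1$ and $N(u) = \{v\}$, we have $N[u] = \{u, v\}$, so the labelSum of $u$ is $f(u) + f(v)$. The definition of a signed dominating function requires this to be at least $1$.

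Both labels lie in $\{-1, 1\}$, so $f(u) + f(v) \in \{-2, 0, 2\}$. The only possible value that is at least $1$ is $2$, and that value occurs only when $f(u) = f(v) = 1$. This gives the statement.

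There is no real obstacle here. We only need to note that $N[u]$ has exactly two elements, which holds because $v \neq u$ (there are no loops). We never use the condition at $v$ or at any other vertex, so the lemma holds for every signed dominating function $f$, regardless of its weight. This is the form in which it is applied in Section~\ref{sectionhard}: attaching a pendant vertex to a vertex forces that vertex to have label $1$.
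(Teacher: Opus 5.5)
Your proposal is correct and matches the paper's argument. Since $N[u]=\{u,v\}$, the labelSum of $u$ is $f(u)+f(v)\in\{-2,0,2\}$, and the condition $f(u)+f(v)\geq 1$ forces both labels to be $1$; the paper states the same reasoning contrapositively (if either label were $-1$, the labelSum of $u$ would be at most zero).
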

\begin{proof}
    Consider a vertex $u \in V$ with $d(u)$ = 1. Let $v$ be the only neighbour of $u$. To ensure that $u$ has a positive labelSum, neither $u$ nor $v$ can be assigned the label $-1$. If either vertex were labeled $-1$, the labelSum of $u$ would be at most zero. Therefore, we must have $f(u) = f(v) = 1$. \qed
\end{proof}

%% file: 3_split.tex
\section{NP-complete on split graphs}
In this section, we prove that the SD problem is NP-complete on split graphs, a subclass of chordal graphs. We provide a polynomial-time reduction from the DS problem on cubic graphs to prove that the SD problem is NP-complete.

\medskip
The complexity result related to the DS problem on cubic graphs is given as follows.
\begin{theorem}[\cite{Kikuno}]\label{kikunothm}
    DS problem is NP-complete even when restricted to cubic graphs.
\end{theorem}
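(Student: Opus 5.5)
Since this is a cited result, I would re-prove it by a direct gadget reduction. Membership in NP is immediate: a candidate set $S$ is verified in polynomial time by checking $|S|\le k$ and that every vertex of $V\setminus S$ has a neighbour in $S$. For hardness I would reduce from \textsc{Vertex Cover} on cubic graphs, which is NP-complete (Garey--Johnson). The idea is the classical reduction from \textsc{Vertex Cover} to \ds{}, which adds a triangle vertex on every edge, but redone so that every degree stays exactly three.

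\textbf{Step 1: the gadget.} Let $H$ be a cubic instance of \textsc{Vertex Cover} with $n$ vertices, $m=3n/2$ edges and budget $k$. I would replace each edge $uv$ by a gadget $X_{uv}$ with the following structure.
\begin{itemize}
\item $X_{uv}$ has two \emph{port} vertices $p_u$ and $p_v$.
\item Each port has degree two inside $X_{uv}$, and is joined by a single edge to $u$ (respectively $v$).
\item Every other vertex of $X_{uv}$ has degree three inside $X_{uv}$.
\end{itemize}
Each original vertex then keeps degree three, since its three incident edges become three port edges, and every gadget vertex also has degree three. The resulting graph $G$ is therefore cubic.

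\textbf{Step 2: the required properties.} For a constant $c$, the gadget should satisfy three conditions.
\begin{enumerate}
\item[(a)] Every dominating set of $G$ contains at least $c$ vertices of $X_{uv}$.
\item[(b)] If $u$ or $v$ is in the dominating set, then $c$ vertices of $X_{uv}$ suffice to dominate all of $X_{uv}$.
\item[(c)] If neither $u$ nor $v$ is chosen, then dominating $X_{uv}$ requires $c+1$ gadget vertices. Moreover, $c+1$ gadget vertices can always be chosen so that they also dominate $u$ and $v$ through the ports.
\end{enumerate}

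\textbf{Step 3: correctness.} The target claim is that $H$ has a vertex cover of size at most $k$ if and only if $\gamma(G)\le k+cm$.

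For the forward direction, take a vertex cover $C$. Choose $C$ together with $c$ suitable vertices per gadget, as given by (b). Every original vertex outside $C$ has an incident edge with its other endpoint in $C$. I must make sure this is enough: the gadget choice for such an edge should dominate the port adjacent to the uncovered endpoint. I would build this into (b) by requiring that the $c$-vertex choice includes or dominates the port on the uncovered side, and in fact contains that port.

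For the backward direction, take a dominating set $D$ of size at most $k+cm$. Apply an exchange argument to every gadget containing at least $c+1$ vertices of $D$: replace those vertices by $u$ (or $v$) plus the $c$-vertex pattern from (b). This never increases $|D|$. After normalisation, (a) and (c) force every edge to have an endpoint in $D\cap V(H)$, and $|D\cap V(H)|\le k$, so $D\cap V(H)$ is the required vertex cover.

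\textbf{Main obstacle.} The hard part is actually finding a small cubic gadget with properties (a)--(c), including the requirement that the $c$-vertex pattern contains the port on the uncovered side. I would first try a gadget built from two copies of $K_4$ minus an edge, whose degree-two vertices serve as attachment points, linked through a short middle structure. I would then verify (a)--(c) by a finite case analysis on which ports and endpoints lie in $D$. If that candidate fails, I would search small cubic-with-two-ports graphs of 8--12 vertices for one meeting the three conditions.
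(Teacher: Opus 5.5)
\textbf{This is a genuine gap.} The paper gives no proof of this statement; it only cites Kikuno et al. So your proposal has to stand on its own, and it does not yet.

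Everything hinges on a two-port gadget with properties (a), (b) and (c), where (b) includes the requirement that the $c$-set used when $u\in D$ contains $p_v$. You never exhibit one. The surrounding bookkeeping is fine once such a gadget exists: the budget $k+cm$, the forward direction, and the exchange argument in the backward direction all go through. But the gadget is the whole content of the hardness proof, and its conditions pull against each other. The $c$-set used when $u\in D$ must contain $p_v$ and avoid $N[p_u]$, yet no $c$-set may dominate all of $X_{uv}$.

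Your first candidate cannot deliver this. Write a copy of $K_4$ minus an edge as $\{s,x,y,t\}$, with $s,t$ its degree-two vertices. It is dominated by $x$ alone. With only one of its vertices in the dominating set, it dominates the outside neighbour of $s$ only if the outside neighbour of $t$ is already in the dominating set. Two such copies in series around a middle part $M$ therefore only relay the status of $u$ and $v$ to the two ends of $M$. The combined gadget has (a)--(c) precisely when $M$ does, with $c$ shifted by $2$, so the middle part would itself have to be the gadget you are looking for. With $M$ a single edge, the two central vertices already dominate all eight vertices while $c=2$, so (c) fails.

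A gadget that does work is built from a $7$-cycle $0,1,\dots,6$ with $p_u=0$ and $p_v=2$.
\begin{itemize}
\item If $0$ is dominated from outside, $\{2,5\}$ dominates the other six vertices and contains $p_v$.
\item If $2$ is dominated from outside, $\{0,4\}$ does the same and contains $p_u$.
\item With no outside help, $\gamma(C_7)=3$.
\item If both ports are dominated from outside, two cycle vertices are still needed to cover $1,3,4,5,6$.
\end{itemize}
To make every degree three, attach a filler to each of $1,3,4,5,6$. The filler is a triangular prism in which one edge between the two triangles is subdivided by a new vertex $z$, and $z$ is joined to the cycle vertex.
\begin{itemize}
\item The six filler vertices other than $z$ have all their neighbours inside the filler, so it always needs at least two vertices.
\item Two vertices suffice: a neighbour of $z$ together with a vertex of the other triangle not adjacent to $z$.
\item No two-vertex set containing $z$ covers those six vertices. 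So a dominating set using $z$ spends at least three filler vertices there, and can be exchanged for the two-vertex pattern plus the attachment vertex.
\end{itemize}
Hence each filler contributes exactly $2$ and never interacts with the cycle. This gives a valid gadget with $c=12$, where $13$ vertices are needed when neither $u$ nor $v$ is chosen, and your argument then goes through.
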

Given an instance $I = (G, k)$ of the DS problem where $G$ is a cubic graph, we construct an instance $I' = (G', 2k)$ of the SD problem as follows.
\begin{itemize}
    \item Two copies of the vertex set \(V(G)\), denoted by \(A\) and $X$, are included in \(G'\). A vertex $u_i \in V(G)$ is denoted by $a_i$ in $A$ and $x_i$ in $X$. Thus, $A = \{a_1,a_2, ..., a_{n}\}$ and $X =\{x_1, x_2, ..., x_n\}$.
    \item Two additional sets $B$ of size $4n$ and $Y$ of size $2n$ are created. Let $B = \{b_1,b_2, ..., b_{4n}\}$ and $Y = \{y_1,y_2, ..., y_{2n}\}$.
    \item For each pair of vertices \( a_i \in A \) and \( x_j \in X \) with $i, j \in [n]$, an edge is added between \( a_i \) and \( x_j \) if and only if \( u_i \in N_G(u_j) \). In addition, an edge is added between $a_i$ and $x_i$ for each $i \in [n]$.
    \item For each $i \in [n]$, vertex $x_i \in X$ is made adjacent to four vertices $b_{4i-3}$,$b_{4i-2}$,$b_{4i-1}$ and $b_{4i}$.
    \item For each $i \in [2n]$, vertex $y_i \in Y$ is made adjacent to both $b_{2i-1}$ and $b_{2i}$.
    \item Finally, a clique is formed on the vertex set $A \cup B$.
\end{itemize}
\begin{figure} [h!]
    \centering
        \begin{tikzpicture} [scale = 0.45]
    \filldraw[black] (-2, 0) circle (3pt);
    \filldraw[black] (0, 1) circle (3pt);
    \filldraw[black] (0, -1) circle (3pt);
    \filldraw[black](2, 1) circle (3pt);
    \filldraw[black] (2, -1) circle (3pt);
    \filldraw[black] (4, 0) circle (3pt);
    \draw[thin] (-1.9, 0.1) -- (-0.1, 0.9);
    \draw[thin] (-1.9, -0.1) -- (-0.1, -0.9);
    \draw[thin] (-1.85, 0) -- (3.85, 0);
    \draw[thin] (0, 0.85) -- (0, -0.85);
    \draw[thin] (2, 0.85) -- (2, -0.85);
    \draw[thin] (0.15, 1) -- (1.85, 1);
    \draw[thin] (0.15, -1) -- (1.85, -1);
    \draw[thin] (2.1, 0.9) -- (3.9, 0.1);
    \draw[thin] (2.1, -0.9) -- (3.9, -0.1);
    \filldraw (-2.55, -0.35) circle (0cm) node[anchor=south]{$u_1$};
    \filldraw (0, 1.1) circle (0cm) node[anchor=south]{$u_2$};
    \filldraw (0, -2) circle (0cm) node[anchor=south]{$u_3$};
    \filldraw (2, 1.1) circle (0cm) node[anchor=south]{$u_4$};
    \filldraw (2, -2) circle (0cm) node[anchor=south]{$u_5$};
    \filldraw (4.75, -0.35) circle (0cm) node[anchor=south]{$u_6$};
    \filldraw (4, -5) circle (0cm) node[anchor=south]{};
    \filldraw (1.5, -6.5) circle (0cm) node[anchor=south]{};
    \filldraw (1.5, -4.2) circle (0cm) node[anchor=south]{$G$};
    \end{tikzpicture}
     \begin{tikzpicture} [thick,scale=1.25, every node/.style={scale=0.95}]
     \draw (0, -0.5) ellipse (1 and 3.15);
        \draw (4,-1.1) ellipse (0.8 and 2.5);
        \draw[thin, gray] (-0.99, 0.4) -- (0.99, 0.4);
        \draw[thin, gray] (3.2, -0.6) -- (4.8, -0.6);

        \draw[blue] (4, 0.9) -- (0, 0);
        \draw[blue] (4, 0.9) -- (0, -0.3);
        \draw[blue] (4, 0.9) -- (0, -0.6);
        \draw[blue] (4, 0.9) -- (0, -0.9);

        \draw[thin, gray] (4, -0.1) -- (0, -2.2);
        \draw[thin, gray] (4, -0.1) -- (0, -2.5);
        \draw[thin, gray] (4, -0.1) -- (0, -2.8);
        \draw[thin, gray] (4, -0.1) -- (0, -3.1);


        


        \draw[blue] (4, 0.9) -- (0, 2.3);
        \draw[thin, gray] (4, 0.7) -- (0, 2.3);
        \draw[thin, gray] (4, 0.5) -- (0, 2.3);
        \draw[thin, gray] (4, -0.1) -- (0, 2.3);
        \draw[thin, gray] (4, 0.7) -- (0, 2);
        \draw[blue] (4, 0.9) -- (0, 2);
        \draw[thin, gray] (4, 0.5) -- (0, 2);
        \draw[thin, gray] (4, 0.3) -- (0, 2);
        \draw[thin, gray] (4, 0.5) -- (0, 1.7);
        \draw[blue] (4, 0.9) -- (0, 1.7);
        \draw[thin, gray] (4, 0.7) -- (0, 1.7);
        \draw[thin, gray] (4, 0.3) -- (0, 1.7);
        \draw[thin, gray] (4, 0.3) -- (0, 1.4);
        \draw[thin, gray] (4, 0.7) -- (0, 1.4);
        \draw[thin, gray] (4, 0.1) -- (0, 1.4);
        \draw[thin, gray] (4, -0.1) -- (0, 1.4);
        \draw[thin, gray] (4, 0.1) -- (0, 1.1);
        \draw[thin, gray] (4, 0.5) -- (0, 1.1);
        \draw[thin, gray] (4, 0.3) -- (0, 1.1);
        \draw[thin, gray] (4, -0.1) -- (0, 1.1);
        \draw[thin, gray] (4, -0.1) -- (0, 0.8);
        \draw[blue] (4, 0.9) -- (0, 0.8);
        \draw[thin, gray] (4, -0.1) -- (0, 0.8); 
        \draw[thin, gray] (4, 0.1) -- (0, 0.8);    

        \draw[thin, gray] (0, 0) -- (4, -1.15);       
        \draw[thin, gray] (0, -0.3) -- (4, -1.15);        
        \draw[thin, gray] (0, -0.6) -- (4, -1.35);    
        \draw[thin, gray] (0, -0.9) -- (4, -1.35);
        
        \draw[thin, gray] (0, -2.2) -- (4, -2.95);
        \draw[thin, gray] (0, -2.5) -- (4, -2.95);   
        \draw[thin, gray] (0, -2.8) -- (4, -3.15);
        \draw[thin, gray] (0, -3.1) -- (4, -3.15);

        \filldraw[gray] (0, -1.5) circle (0.5pt) node[anchor=south]{};
        \filldraw[gray] (0, -1.6) circle (0.5pt) node[anchor=south]{};
        \filldraw[gray] (0, -1.7) circle (0.5pt) node[anchor=south]{};
        
        \filldraw[gray] (4, -2.2) circle (0.5pt) node[anchor=south]{};
        \filldraw[gray] (4, -2.3) circle (0.5pt) node[anchor=south]{};
        \filldraw[gray] (4, -2.1) circle (0.5pt) node[anchor=south]{};

        \filldraw (0, 1.7) circle (1pt) node[anchor=south]{};
        \filldraw (0, 1.1) circle (1pt) node[anchor=south]{};
        \filldraw (0, 1.4) circle (1pt) node[anchor=south]{};
        \filldraw (0, 0.8) circle (1pt) node[anchor=south]{};
        \filldraw (0, 2) circle (1pt) node[anchor=south]{};
        \filldraw (0, 2.3) circle (1pt) node[anchor=south]{};

        \filldraw (0, -2.2) circle (1pt) node[anchor=south]{};
        \filldraw (0, -2.5) circle (1pt) node[anchor=south]{};
        \filldraw (0, -2.8) circle (1pt) node[anchor=south]{};
        \filldraw (0, -3.1) circle (1pt) node[anchor=south]{};

        \filldraw (-0.25, 1.55) circle (0cm) node[anchor=south]{$a_3$};
        \filldraw (-0.25, 1.25) circle (0cm) node[anchor=south]{$a_4$};
        \filldraw (-0.25, 0.95) circle (0cm) node[anchor=south]{$a_5$};
        \filldraw (-0.25, 0.65) circle (0cm) node[anchor=south]{$a_6$}; 
        \filldraw (-0.25, 1.85) circle (0cm) node[anchor=south]{$a_2$};
        \filldraw (-0.25, 2.15) circle (0cm) node[anchor=south]{$a_1$};  

        \filldraw (4.25, 0.75) circle (0cm) node[anchor=south]{$x_1$};
        \filldraw (4.25, 0.55) circle (0cm) node[anchor=south]{$x_2$};
        \filldraw (4.25, 0.35) circle (0cm) node[anchor=south]{$x_3$};
        \filldraw (4.25, 0.15) circle (0cm) node[anchor=south]{$x_4$};
        \filldraw (4.25, -0.05) circle (0cm) node[anchor=south]{$x_5$};
        \filldraw (4.25, -0.25) circle (0cm) node[anchor=south]{$x_6$};

        \filldraw (-0.25, -0.1) circle (0cm) node[anchor=south]{$b_1$};
        \filldraw (-0.25, -0.4) circle (0cm) node[anchor=south]{$b_2$};
        \filldraw (-0.25, -0.7) circle (0cm) node[anchor=south]{$b_3$};
        \filldraw (-0.25, -1) circle (0cm) node[anchor=south]{$b_4$};    


        \filldraw (-0.25, -2.3) circle (0cm) node[anchor=south]{$b_{21}$};
        \filldraw (-0.25, -2.6) circle (0cm) node[anchor=south]{$b_{22}$};
        \filldraw (-0.25, -2.9) circle (0cm) node[anchor=south]{$b_{23}$};
        \filldraw (-0.25, -3.2) circle (0cm) node[anchor=south]{$b_{24}$};
        
        \filldraw (4.25, -1.25) circle (0cm) node[anchor=south]{$y_1$};
        \filldraw (4.25, -1.45) circle (0cm) node[anchor=south]{$y_2$};
        
        \filldraw (4.25, -3.1) circle (0cm) node[anchor=south]{$y_{11}$};
        \filldraw (4.25, -3.3) circle (0cm) node[anchor=south]{$y_{12}$};
        
        \filldraw (0, 0) circle (1pt) node[anchor=south]{};
        \filldraw (0, -0.3) circle (1pt) node[anchor=south]{};
        \filldraw (0, -0.6) circle (1pt) node[anchor=south]{};
        \filldraw (0, -0.9) circle (1pt) node[anchor=south]{};

        \filldraw (4, 0.9) circle (1pt) node[anchor=south]{};
        \filldraw (4, 0.7) circle (1pt) node[anchor=south]{};
        \filldraw (4, 0.5) circle (1pt) node[anchor=south]{};
        \filldraw (4, 0.3) circle (1pt) node[anchor=south]{};
        \filldraw (4, 0.1) circle (1pt) node[anchor=south]{};
        \filldraw (4, -0.1) circle (1pt) node[anchor=south]{};
        
        \filldraw (4, -1.15) circle (1pt) node[anchor=south]{};
        \filldraw (4, -1.35) circle (1pt) node[anchor=south]{};

        \filldraw (4, -2.95) circle (1pt) node[anchor=south]{};
        \filldraw (4, -3.15) circle (1pt) node[anchor=south]{};
        
        \filldraw (-1.2, 1) circle (0cm) node[anchor=south]{$A$};
        \filldraw (-1.3, -1.6) circle (0cm) node[anchor=south]{$B$};
        \filldraw (0.1, -4.1) circle (0cm) node[anchor=south]{clique};
        \filldraw (4.2, -4.1) circle (0cm) node[anchor=south]{independent set};
        \filldraw (5.05, 0.2) circle (0cm) node[anchor=south]{$X$};
        \filldraw (5.1, -2.1) circle (0cm) node[anchor=south]{$Y$};
        \filldraw (2, -4.7) circle (0cm) node[anchor=south]{$G'$};
     \end{tikzpicture}
    \caption{Reduced instance of the SD problem constructed from an instance of the DS problem. Here, $k = 2$. Edges between the vertices of the clique $A \cup B$ are not shown in the figure. The edges incident on the vertex $x_1$ are colored in blue.}
    \label{fig:fig2}
\end{figure}
This concludes the construction of $G'$. See Fig. \ref{fig:fig2} for an illustration.

\medskip
Observe that $G'$ is a split graph, where the subgraph induced by $A \cup B$ forms a clique and the subgraph induced by $X \cup Y$ forms an independent set.
\begin{lemma}~\label{lemma1}
    $G$ has a dominating set of size at most $k$ if and only if $G'$ has a  signed dominating function of weight at most $2k$.
\end{lemma}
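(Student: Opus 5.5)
The plan is to prove the two directions separately. In both directions the gadget formed by $y_i$ and its two neighbours $b_{2i-1},b_{2i}$ pins the weight contributed by $B\cup Y$ to exactly $2n$. Note that $|V(G')|=8n$, each $x_i$ has closed neighbourhood $\{x_i\}\cup N_G[u_i]$-copies in $A$ (four vertices, since $G$ is cubic) together with $b_{4i-3},\dots,b_{4i}$, and each $y_i$ has degree two.

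\emph{Forward direction.} Let $D$ be a dominating set of $G$ with $|D|\le k$. I define $f$ as follows: $f(b)=1$ for all $b\in B$, $f(y)=-1$ for all $y\in Y$, $f(x)=-1$ for all $x\in X$, and $f(a_i)=1$ iff $u_i\in D$, otherwise $f(a_i)=-1$. The weight is $4n-2n-n+(|D|-(n-|D|))=2|D|\le 2k$; if $|D|<k$ this is still fine.

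The labelSums are checked type by type.
\begin{itemize}
\item For $y_i$ the labelSum is $-1+1+1=1$.
\item For $x_i$ it is $-1+4+\sum_{u_j\in N_G[u_i]}f(a_j)$. Since $D$ dominates $u_i$, at least one of the four terms is $+1$, so the labelSum is at least $-1+4-2=1$.
\item For a clique vertex the closed neighbourhood contains all of $A\cup B$, of weight $4n+2|D|-n$. An $a_i$ has four further neighbours in $X$, and a $b_j$ has one neighbour in $X$ and one in $Y$. Either way the labelSum is at least $3n+2|D|-4\ge 1$ for $n\ge 2$, and cubic graphs have $n\ge 4$.
\end{itemize}

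\emph{Backward direction.} Let $f$ be a signed dominating function on $G'$ of weight at most $2k$. First I normalise $f$ on each triple $T_i=\{y_i,b_{2i-1},b_{2i}\}$. The constraint at $y_i$ forces at least two $+1$'s in $T_i$. I replace $f$ on $T_i$ by $(f(y_i),f(b_{2i-1}),f(b_{2i}))=(-1,1,1)$. This does not increase the weight, because the old weight of $T_i$ was at least $1$. It also preserves all constraints, for two reasons. Every $b$ whose label is raised only helps other vertices. Lowering $y_i$ affects only the labelSum of $y_i$ itself, since $N(y_i)\subseteq T_i$, and that labelSum is now exactly $1$. After normalisation $B\cup Y$ has weight $2n$.

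Let $p=|\{a_i: f(a_i)=1\}|$ and $q=|\{x_i: f(x_i)=1\}|$. The total weight is then $2n+(2p-n)+(2q-n)=2(p+q)\le 2k$, so $p+q\le k$. I set $D=\{u_i : f(a_i)=1 \text{ or } f(x_i)=1\}$, which gives $|D|\le p+q\le k$.

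To see that $D$ dominates $G$, take any $u_i$. If some $a_j$ with $u_j\in N_G[u_i]$ has label $+1$, then $u_j\in D$ dominates $u_i$. Otherwise all four such $a_j$ are $-1$ and all four $B$-neighbours of $x_i$ are $+1$. The labelSum of $x_i$ is then $f(x_i)+4-4$, so $f(x_i)=1$ is forced, hence $u_i\in D$.

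The main obstacle is the normalisation step: one must verify that changing labels on $T_i$ cannot break any other vertex's constraint. This works because $y_i$ is adjacent only to $b_{2i-1}$ and $b_{2i}$, and the only labels that decrease are on $Y$.
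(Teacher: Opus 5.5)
Your argument follows the paper's route. The forward labelling is the same, and the backward direction uses the same budget: $B\cup Y$ costs $2n$, which leaves at most $k$ labels $+1$ on $A\cup X$, and the constraint at $x_i$ then gives domination. Your choice $D=\{u_i: f(a_i)=1 \text{ or } f(x_i)=1\}$ is cleaner than the paper's version. The paper tacitly assumes every $+1$ in $A\cup X$ lies in $A$, and it never justifies why $y_i$ may be taken to be $-1$.

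There is, however, a false sentence in the step you single out as the main obstacle. Lowering $f(y_i)$ changes the labelSum of every vertex of $N[y_i]=T_i$, not only that of $y_i$. The label of $y_i$ also enters the labelSums of $b_{2i-1}$ and $b_{2i}$. If the triple was $(1,1,1)$, both of these labelSums drop by $2$, and nothing you wrote shows they stay positive.

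The claim is still true, but it needs a global argument. Once every triple has been normalised, $w(B)=4n$ and $w(A)\ge -n$. So every $b_j$ has labelSum at least $4n-n-1-1=3n-2\ge 1$.

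Simpler still, you can drop the normalisation altogether, since only the constraints at vertices of $X$ are used afterwards.
\begin{itemize}
\item Each triple $T_i$ has weight at least $1$ in any signed dominating function. Hence $w(B\cup Y)\ge 2n$, so $w(A\cup X)=2(p+q)-2n\le 2k-2n$ and $p+q\le k$.
\item The four $B$-neighbours of $x_i$ contribute at most $4$ to its labelSum. So if all four $A$-neighbours of $x_i$ are labelled $-1$, the labelSum of $x_i$ is at most $f(x_i)$, which forces $f(x_i)=1$.
\end{itemize}
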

\begin{proof}
$[\Rightarrow]$ Let $S$ be a dominating set of size at most $k$. The corresponding signed dominating function $f$ is given as follows.
\[
f(w) = 
\begin{cases}
-1, & \text{if } w \in \bigcup\limits_{u_i \notin S} \{a_i\} \cup X \cup Y, \\
1, & \text{if } w \in \bigcup\limits_{u_i \in S} \{a_i\} \cup B
\end{cases}
\]
\begin{itemize}   
    \item Since $G$ is cubic, each vertex $u \in X$ is adjacent to eight vertices in $A \cup B$. Vertex $u$ has a labelSum of at least one as all four of its neighbours in $B$ have label $1$ and at least one of its neighbours in $A$ has label $1$.
    \item Each vertex in $Y$ has a labelSum of one due to the positive labels for both of its neighbours in $B$.
    \item The weight of the clique $A \cup B$ is $3n+2k$ which is at least $3n+2$ (as $k \geq$ 1). Therefore, each vertex in $A$ has a labelSum of at least $3n-2$, which is positive for all $n \geq 1$.
    \item Similarly, each vertex in $B$ has a labelSum of at least $3n$, which is at least three for all $n \geq 1$.
\end{itemize}
The weight of $A$ is at most $2k-n$, that of $B$ is $4n$, that of $X$ is $-n$ and that of $Y$ is $-2n$. Since each vertex in $G'$ has a positive labelSum, we conclude that $f$ is a signed dominating function of weight at most $2k$.
\medskip

\noindent $[\Leftarrow]$ Let $f$ be a signed dominating function of weight at most $2k$. The corresponding dominating set $S$ is given as follows. 
\begin{itemize}
    \item For each $i \in [2n]$, vertex $y_i$ is adjacent to two vertices in $B$, so its closed neighbourhood has size three. To ensure a positive labelSum for $y_{i}$, at most one of $b_{2i-1}$, $b_{2i}$ and $y_i$ can be labeled $-1$. Since $y_i$ belongs to the independent set $X \cup Y$, we label $y_i$ with $-1$ and both $b_{2i-1}$ and $b_{2i}$ with $1$. Hence, the weight of $B$ is $4n$ and the weight of $Y$ is $-2n$.
    \item Since the weight of $B \cup Y$ is $2n$, the weight available for $A \cup X$ is at most $2k-2n$. Therefore, at most $k$ vertices in $A \cup X$ can be labeled $1$, while the remaining vertices must be labeled $-1$.
    \item At this point, every vertex $u \in X$ has a labelSum of four from its neighbours in $B$. For $u$ to have a positive labelSum, either $u$ itself or at least one of its neighbours in $A$ must be labeled 1. Since at most $k$ vertices in $A \cup X$ can receive label $1$, it follows that we must assign label 1 to $k$ vertices in $A$ such that every vertex in $X$ is adjacent to at least one vertex in $A$ labeled 1.
    \item The $k$ vertices in $A$ that are labeled 1 will correspond to a $k$-vertex dominating set $S$ for $G$.
\end{itemize} \qed
\end{proof}

Hence, from Theorem \ref{kikunothm} and Lemma \ref{lemma1}, we arrive at the following theorem.
\begin{theorem}
    SD problem is NP-complete on split graphs.
\end{theorem}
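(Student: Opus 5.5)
The theorem follows by combining membership in NP with the reduction of Lemma~\ref{lemma1}, so the plan has three parts.

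\textbf{Membership in NP.} A certificate is a labeling $f: V(G') \to \{-1,1\}$. In polynomial time we check that every vertex has labelSum at least one and that the total weight is at most the bound.

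\textbf{The reduction.} We start from an instance $(G,k)$ of the DS problem with $G$ cubic, which is NP-complete by Theorem~\ref{kikunothm}, and build $(G',2k)$ as described above. The construction is clearly polynomial: $G'$ has $8n$ vertices and $O(n^2)$ edges. We then confirm that $G'$ is a split graph. The set $A \cup B$ is a clique by construction. The set $X \cup Y$ is independent, because every edge added touches $A$ or $B$: the $A$--$X$ edges, the $X$--$B$ edges, and the $Y$--$B$ edges. Lemma~\ref{lemma1} gives the equivalence of the two instances, so hardness transfers.

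\textbf{Main obstacle.} The delicate part is the backward direction of Lemma~\ref{lemma1}, and I would re-verify it rather than take it on faith. The argument labels each $y_i$ with $-1$ and its two $B$-neighbours with $1$. A general signed dominating function need not have this form. For example, $y_i$ could be $1$ with one neighbour at $-1$, or all three could be $1$.

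I would handle this with an exchange argument. In any feasible $f$, each triple $\{y_i,b_{2i-1},b_{2i}\}$ has weight at least one. If the triple has weight three, or has weight one with some $b$ labeled $-1$, we modify $f$ locally so that every $b$ in the triple is $1$ and the weight does not increase.
\begin{itemize}
\item In the weight-one case, swap labels between $y_i$ and the $b$ labeled $-1$. Raising a $b$ only increases the labelSums in the clique and at $X$. Lowering $y_i$ affects only $y_i$ itself, whose labelSum stays one.
\item In the weight-three case, simply set $y_i$ to $-1$. This decreases the weight, and $y_i$'s labelSum becomes one.
\end{itemize}

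After normalization, $B \cup Y$ has weight exactly $2n$. Then $A \cup X$ has weight at most $2k-2n$, so at most $k$ of its $2n$ vertices are labeled $1$. Each $x_j$ has labelSum $4 + f(x_j) + \sum_{a \in N(x_j)\cap A} f(a)$, where $N(x_j)\cap A$ consists of four vertices, namely $a_j$ and the copies of $x_j$'s three neighbours in $G$. This forces $x_j$ or some vertex of $N(x_j)\cap A$ to be labeled $1$.

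Finally, define $S$ to contain $u_j$ for every positively labeled $x_j$, and $u_i$ for every positively labeled $a_i$. Then $|S| \le k$, and $S$ dominates $G$, because $N(x_j)\cap A$ corresponds exactly to $N_G[u_j]$.
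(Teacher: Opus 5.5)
Your proof is correct and uses the same reduction and the same counting as the paper, but your backward direction is more careful. The paper simply fixes $f(y_i)=-1$ and $f(b_{2i-1})=f(b_{2i})=1$ without justification, and then tacitly assumes that all positive labels in $A\cup X$ lie in $A$. Your exchange argument, together with putting $u_j$ into $S$ for every positively labelled $x_j$, closes both holes. You also state NP membership, which the paper leaves implicit.

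One local justification in your normalization is wrong, though harmlessly. Lowering $y_i$ does not affect only $y_i$, since $y_i\in N[b_{2i-1}]\cap N[b_{2i}]$. In the swap case, the $-2$ at $y_i$ and the $+2$ at the raised $b$ cancel at both $b$'s. In the weight-three case, however, both $b$'s lose $2$, and your bullet gives no reason their labelSums stay positive.

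This does not affect your conclusion. Afterwards you only use the total weight and the labelSums at $X$, and the normalization cannot worsen either of them. Moreover, once all triples are normalized, every $b$ has labelSum at least $w(A)+4n-2\ge 3n-2$, so the final labelling is signed dominating anyway.

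In fact you can skip the normalization altogether. The constraint at $y_i$ says each triple $\{y_i,b_{2i-1},b_{2i}\}$ has weight at least $1$, so $w(B\cup Y)\ge 2n$ already holds for the original $f$. Since $B$ contributes at most $4$ to the labelSum of $x_j$, you still get $4+f(x_j)+\sum_{a\in N(x_j)\cap A}f(a)\ge 1$, and the rest of your counting is unchanged.
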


%% file: 4_fvs.tex
\section{W[1]-hard parameterized by feedback vertex set number}\label{sectionhard}
In this section, we study the parameterized complexity of the SD problem parameterized by feedback vertex set number.
We provide a parameterized reduction from a well known W[1]-hard problem, \mrss{} (MRSS). The MRSS problem is defined as follows.
\begin{tcolorbox}
{
\textbf{Problem.} \mrss{} \newline
\textbf{Input.} An integer $k$, a set $S = \{s_1, s_2, ..., s_n\}$ of vectors with $s_i \in \mathbb{N}^k$ for every $i$ with $1 \leq i \leq n$, a target vector $t \in \mathbb{N}^k$ and an integer $m$. \newline
\textbf{Parameter.} $k+m$\newline
\textbf{Output.} Does there exists a subset $\mathcal{S} \subseteq S$ with $|\mathcal{S}| \leq m$ such that $\sum\limits_{s \in \mathcal{S}} s \geq t$?
}
\end{tcolorbox}
The parameterized complexity of the MRSS problem has been established in the literature as follows.
\begin{theorem} [\cite{GNN}]~\label{mrss}
    MRSS problem is W[1]-hard parameterized by $k+m$, even when all integers in the input are in unary.
\end{theorem}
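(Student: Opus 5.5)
Although the paper cites \cite{GNN} for Theorem~\ref{mrss}, I would prove it directly with a parameterized reduction from \textsc{Multicolored Clique}. That problem is W[1]-hard parameterized by the number $\ell$ of colour classes. An instance is a graph $H$ whose vertex set is partitioned into $V_1,\dots,V_\ell$, and the question is whether $H$ has a clique containing exactly one vertex from each class. First give the vertices of $H$ distinct identifiers $\mathrm{id}(v)\in\{1,\dots,N\}$, where $N=|V(H)|$. The plan is to encode the choice of a clique vertex per class, and of a clique edge per pair of classes, as a choice of vectors. Covering a target vector will force one vector per role, and a complementary-pair trick will force the chosen edges to agree with the chosen vertices.

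\textbf{Coordinates and target.} Set $m=\ell+\binom{\ell}{2}$. The coordinates are:
\begin{itemize}
\item one \emph{vertex-count} coordinate $p_i$ for each $i\in[\ell]$;
\item one \emph{edge-count} coordinate $q_{\{i,j\}}$ for each pair $i<j$;
\item two \emph{consistency} coordinates $c^+_{i,j}$ and $c^-_{i,j}$ for each ordered pair $i\neq j$.
\end{itemize}
The dimension is therefore $\ell+\binom{\ell}{2}+2\ell(\ell-1)$, which depends only on $\ell$. The target $t$ has value $1$ on every count coordinate and $N$ on every consistency coordinate.

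\textbf{Vectors.} For $v\in V_i$, the vertex vector $s_v$ has a $1$ at $p_i$. For every $j\neq i$ it has $\mathrm{id}(v)$ at $c^+_{i,j}$ and $N-\mathrm{id}(v)$ at $c^-_{i,j}$. All other entries are $0$. For an edge $uv$ with $u\in V_i$ and $v\in V_j$, the edge vector $s_{uv}$ has a $1$ at $q_{\{i,j\}}$. It has $N-\mathrm{id}(u)$ at $c^+_{i,j}$ and $\mathrm{id}(u)$ at $c^-_{i,j}$, and symmetrically $N-\mathrm{id}(v)$ at $c^+_{j,i}$ and $\mathrm{id}(v)$ at $c^-_{j,i}$. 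All other entries are $0$. Every entry is at most $N$, so the instance is polynomial in size even in unary. The new parameter $k+m$ is bounded by a function of $\ell$, so this is a valid parameterized reduction.

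\textbf{Correctness.} For the forward direction, let $K$ be a multicolored clique. Pick the $\ell$ vertex vectors of $K$ and the $\binom{\ell}{2}$ edge vectors of its edges, $m$ vectors in total. Every count coordinate then sums to exactly $1$. Each consistency coordinate receives one vertex contribution and one edge contribution, which sum to $\mathrm{id}(v)+N-\mathrm{id}(v)=N$.

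For the backward direction, suppose some set of at most $m$ vectors sums to at least $t$.
\begin{itemize}
\item Each of the $m$ count coordinates must be covered, and each vector has a nonzero entry on exactly one count coordinate. So the chosen set contains exactly one vertex vector $s_{x_i}$ with $x_i\in V_i$ for each $i$, and exactly one edge vector per pair $\{i,j\}$.
\item Fix an ordered pair $(i,j)$, and let $u\in V_i$ be the $V_i$-endpoint of the chosen edge for $\{i,j\}$. Only these two vectors contribute to coordinates $c^\pm_{i,j}$. Coordinate $c^+_{i,j}$ gives $\mathrm{id}(x_i)+N-\mathrm{id}(u)\ge N$, and coordinate $c^-_{i,j}$ gives $N-\mathrm{id}(x_i)+\mathrm{id}(u)\ge N$.
\item These two inequalities force $\mathrm{id}(x_i)=\mathrm{id}(u)$, so $u=x_i$.
\end{itemize}
Hence every chosen edge joins the chosen vertices of its two classes, and $\{x_1,\dots,x_\ell\}$ is a multicolored clique.

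The main obstacle is this consistency step. Because MRSS only asks for $\ge t$ rather than equality, exact matching cannot be imposed through a single coordinate. That is why each identifier is paired with its complement $N-\mathrm{id}$: overshooting in one coordinate of the pair forces undershooting in the other. The cardinality bound $m$ is what stops extra vectors from padding the sums, and it must be tight, which it is by construction.
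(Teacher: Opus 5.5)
Your reduction is correct. The paper gives no proof of this statement. It imports the result as a black box from \cite{GNN} and uses it only as the source problem for the reduction in Section~\ref{sectionhard}. You instead reprove it from \textsc{Multicolored Clique}, and the two key steps both check out.
\begin{itemize}
\item \emph{Counting.} There are $m=\ell+\binom{\ell}{2}$ count coordinates with target $1$, and every vector is nonzero in exactly one of them. So any solution of at most $m$ vectors contains exactly one vertex vector per class and exactly one edge vector per pair.
\item \emph{Consistency.} The complementary coordinates $c^+_{i,j},c^-_{i,j}$ with target $N$ give the two inequalities $\mathrm{id}(x_i)\ge\mathrm{id}(u)$ and $\mathrm{id}(u)\ge\mathrm{id}(x_i)$. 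Together they force the agreement you need.
\end{itemize}
Both the dimension and $m$ are $O(\ell^2)$, and every entry is at most $N$, so the parameter bound and the unary claim follow immediately. The one tacit point is that edges inside a colour class receive no vector, which is harmless.

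The citation buys brevity. Your version buys a self-contained, checkable argument and a slightly stronger statement. In your instances every feasible solution uses exactly $m$ vectors and sums to exactly $t$, since the count coordinates are hit exactly once and consistency forces equality on every $c^\pm_{i,j}$. So hardness holds even for the variant with exact cardinality and exact target. That can be convenient when MRSS is used as the source problem for a further reduction.
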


 \medskip
\noindent \textbf{Construction.} Consider an instance $I = (k, m, S, t)$ of the MRSS problem. Let $\max(s)$ denote the value of the largest coordinate of a vector $s$. We construct an instance $I' = (G, k')$ of the SD problem in the following way. 
\begin{itemize}
    \item $k$ vertices $\{u_1, u_2 ,..., u_k\}$ are created in $G$.
    \item A new set of vectors $S'$ is constructed from $S$ as follows: 
        \subitem For each vector $s \in S$, each coordinate of $s$ is increased by 1 and the resulting vector is included in $S'$.
        \subitem $\max(t)$ copies of the vector $(2,2)$ are added to $S'$.
    \item The vector $t'$ is obtained from $t$ by adding $m$ to each of its coordinates.
    \item For each $j \in [k]$, a set $P_j$ is introduced with $(\sum\limits_{s_i \in S'} s_i(j)) - 2t'(j)$ vertices.
    \item For each $j \in [k]$, vertex $u_j$ is made adjacent to every vertex in the set $P_j$.
    \item For each $s_i \in S'$, $\max(s_i)$ paths of length two are created. For $l \in [\max(s_i)]$, vertices of these paths are denoted by $a_i^l, b_i^l$ and $c_i^l$, with $b_i^l$ adjacent to both $a_i^l$ and $c_i^l$. A vertex $d_i$ is then created and made adjacent to every vertex in $\bigcup\limits_{l \in [\max(s_i)]}c_i^l$.
    \item Let $A_{s_i}$ = $\bigcup\limits_{l \in [\max(s_i)]}$ $a_i^l$, $B_{s_i}$ = $\bigcup\limits_{l \in [\max(s_i)]}$ $b_i^l$ and $C_{s_i}$ = $\bigcup\limits_{l \in [\max(s_i)]}$ $c_i^l$.
    \item For each $s_i \in S'$ and $j \in [k]$, vertex $u_j$ is made adjacent to exactly $s_i(j)$ vertices in $A_{s_i}$ arbitrarily.
\end{itemize}
This concludes the construction of the reduced instance $G$. See  Fig.~\ref{fig: Fig2} for an illustration. 

\medskip
The purpose of adding $\max(t)$ copies of the vector $(2,2)$ to $S'$ is to guarantee that, for each $j \in [k]$, the value of $(\sum\limits_{s_i \in S'} s_i(j)) - 2t'(j)$ is positive. This ensures that the set $P_j$, for each $j \in [k]$, contains at least one vertex.

\medskip
We set $k' = \sum\limits_{j  \in [k]} (\sum\limits_{s_i \in S'} s_i(j) - 2t'(j))$+ $k +\sum\limits_{s_i \in S'}$ ($\max(s_i)-1$) + $2m$ and the set corresponding to feedback vertex set number is $\bigcup\limits_{j \in [k]} \{u_j\}$ of size $k$.
\begin{figure} [t]
\centering
    \begin{tikzpicture} [thick,scale=0.4, every node/.style={scale=0.85}]
        
        \filldraw (3.5, 2) circle (0.15cm);
        \filldraw (5, 2) circle (0.15cm);
        \filldraw (6.5, 2) circle (0.15cm);
        \filldraw (11, 2) circle (0.15cm);
        \filldraw (12.5, 2) circle (0.15cm);
        \filldraw (14, 2) circle (0.15cm);
        \filldraw (18, 2) circle (0.15cm);
        \filldraw (20, 2) circle (0.15cm);
        \filldraw (22, 2) circle (0.15cm);
        \filldraw (24, 2) circle (0.15cm);
        \filldraw (26, 2) circle (0.15cm);
        \filldraw (28, 2) circle (0.15cm);
        \filldraw (30, 2) circle (0.15cm);
        \filldraw (32, 2) circle (0.15cm);

        \filldraw (3.5, -1) circle (0.15cm);
        \filldraw (5, -1) circle (0.15cm);
        \filldraw (6.5, -1) circle (0.15cm);
        \filldraw (11, -1) circle (0.15cm);
        \filldraw (12.5, -1) circle (0.15cm);
        \filldraw (14, -1) circle (0.15cm);
        \filldraw (18, -1) circle (0.15cm);
        \filldraw (20, -1) circle (0.15cm);
        \filldraw (22, -1) circle (0.15cm);
        \filldraw (24, -1) circle (0.15cm);
        \filldraw (26, -1) circle (0.15cm);
        \filldraw (28, -1) circle (0.15cm);
        \filldraw (30, -1) circle (0.15cm);
        \filldraw (32, -1) circle (0.15cm);

        \filldraw (3.5, -4) circle (0.15cm);
        \filldraw (5, -4) circle (0.15cm);
        \filldraw (6.5, -4) circle (0.15cm);
        \filldraw (11, -4) circle (0.15cm);
        \filldraw (12.5, -4) circle (0.15cm);
        \filldraw (14, -4) circle (0.15cm);
        \filldraw (18, -4) circle (0.15cm);
        \filldraw (20, -4) circle (0.15cm);
        \filldraw (22, -4) circle (0.15cm);
        \filldraw (24, -4) circle (0.15cm);
        \filldraw (26, -4) circle (0.15cm);
        \filldraw (28, -4) circle (0.15cm);
        \filldraw (30, -4) circle (0.15cm);
        \filldraw (32, -4) circle (0.15cm);

        \filldraw (5, -7) circle (0.15cm);
        \filldraw (12.5, -7) circle (0.15cm);
        \filldraw (19, -7) circle (0.15cm);
        \filldraw (23, -7) circle (0.15cm);
        \filldraw (27, -7) circle (0.15cm);
        \filldraw (31, -7) circle (0.15cm);

        \filldraw (5, 9) circle (0.15cm);
        \filldraw (31, 9) circle (0.15cm);

        \draw[thin] (3.5, 2) -- (5, 9);
        \draw[thin] (5, 2) -- (5, 9);  
        \draw[thin] (11, 2) -- (5, 9); 
        \draw[thin] (12.5, 2) -- (5, 9); 
        \draw[thin] (14, 2) -- (5, 9); 
        \draw[thin] (18, 2) -- (5, 9);
        \draw[thin] (20, 2) -- (5, 9);
        \draw[thin] (22, 2) -- (5, 9);
        \draw[thin] (24, 2) -- (5, 9); 
        \draw[thin] (26, 2) -- (5, 9);
        \draw[thin] (28, 2) -- (5, 9);
        \draw[thin] (30, 2) -- (5, 9);
        \draw[thin] (32, 2) -- (5, 9);

        \draw[thin] (3.5, 2) -- (31, 9); 
        \draw[thin] (5, 2) -- (31, 9);  
        \draw[thin] (6.5, 2) -- (31, 9); 
        \draw[thin] (11, 2) -- (31, 9);
        \draw[thin] (12.5, 2) -- (31, 9);
        \draw[thin] (18, 2) -- (31, 9);
        \draw[thin] (20, 2) -- (31, 9);
        \draw[thin] (22, 2) -- (31, 9);
        \draw[thin] (24, 2) -- (31, 9);
        \draw[thin] (26, 2) -- (31, 9);
        \draw[thin] (28, 2) -- (31, 9);
        \draw[thin] (30, 2) -- (31, 9);
        \draw[thin] (32, 2) -- (31, 9);

        \draw[thin] (3.5, 2.0) -- (3.5, -1); 
        \draw[thin] (5, 2.0) -- (5, -1); 
        \draw[thin] (6.5, 2.0) -- (6.5, -1); 
        \draw[thin] (11.0, 2.0) -- (11, -1); 
        \draw[thin] (12.5, 2.0) -- (12.5, -1); 
        \draw[thin] (14.0, 2.0) -- (14, -1); 
        \draw[thin] (18, 2.0) -- (18, -1);
        \draw[thin] (20, 2.0) -- (20, -1);
        \draw[thin] (22, 2.0) -- (22, -1);
        \draw[thin] (26, 2.0) -- (26, -1);
        \draw[thin] (30, 2.0) -- (30, -1);
        \draw[thin] (24, 2.0) -- (24, -1);
        \draw[thin] (28, 2.0) -- (28, -1);
        \draw[thin] (32, 2.0) -- (32, -1);

        \draw[thin] (3.5, -4) -- (3.5, -1);  
        \draw[thin] (5, -4) -- (5, -1);
        \draw[thin] (6.5, -4) -- (6.5, -1); 
        \draw[thin] (11.0, -4) -- (11, -1); 
        \draw[thin] (12.5, -4) -- (12.5, -1); 
        \draw[thin] (14.0, -4) -- (14, -1); 
        \draw[thin] (18, -4) -- (18, -1);
        \draw[thin] (20, -4) -- (20, -1);
        \draw[thin] (22, -4) -- (22, -1);
        \draw[thin] (26, -4) -- (26, -1);
        \draw[thin] (30, -4) -- (30, -1);
        \draw[thin] (24, -4) -- (24, -1);
        \draw[thin] (28, -4) -- (28, -1);
        \draw[thin] (32, -4) -- (32, -1);

        \draw[thin] (3.5, -4) -- (5, -7); 
        \draw[thin] (6.5, -4) -- (5, -7); 
        \draw[thin] (5, -4) -- (5, -7); 
        \draw[thin] (11.0, -4) -- (12.5, -7); 
        \draw[thin] (14.0, -4) -- (12.5, -7); 
        \draw[thin] (12.5, -4) -- (12.5, -7); 
        \draw[thin] (18, -4) -- (19, -7);
        \draw[thin] (20, -4) -- (19, -7);
        \draw[thin] (22, -4) -- (23, -7);
        \draw[thin] (24, -4) -- (23, -7);
        \draw[thin] (26, -4) -- (27, -7);
        \draw[thin] (28, -4) -- (27, -7);
        \draw[thin] (30, -4) -- (31, -7);
        \draw[thin] (32, -4) -- (31, -7);
        
        \filldraw (5, 12) circle (0.15cm);
        \filldraw (3.5, 12) circle (0.15cm);
        \filldraw (6.5, 12) circle (0.15cm);

        \draw[thin] (5, 11.9) -- (5, 9.1);
        \draw[thin] (3.57, 11.93) -- (4.93, 9.07);
        \draw[thin] (6.43, 11.93) -- (5.07, 9.07);
        
        \filldraw (31, 12) circle (0.15cm);
        \filldraw (29.5, 12) circle (0.15cm);
        \filldraw (32.5, 12) circle (0.15cm);

        \draw[thin] (31, 11.9) -- (31, 9.1);
        \draw[thin] (29.57, 11.93) -- (30.93, 9.07);
        \draw[thin] (32.43, 11.93) -- (31.07, 9.07);

        \draw (2.8, 1.25) circle (0cm) node[anchor=south]{$a_1^1$};        
        \draw (4.4, 1.25) circle (0cm) node[anchor=south]{$a_1^2$};       
        \draw (7.3, 1.25) circle (0cm) node[anchor=south]{$a_1^3$};     
        \draw (10.3, 1.25) circle (0cm) node[anchor=south]{$a_2^1$};
        \draw (11.9, 1.25) circle (0cm) node[anchor=south]{$a_2^2$};
        \draw (14.8, 1.25) circle (0cm) node[anchor=south]{$a_2^3$};
        \draw (17.3, 1.25) circle (0cm) node[anchor=south]{$a_3^1$};
        \draw (19.4, 1.25) circle (0cm) node[anchor=south]{$a_3^2$};
        \draw (21.3, 1.25) circle (0cm) node[anchor=south]{$a_4^1$};
        \draw (23.4, 1.25) circle (0cm) node[anchor=south]{$a_4^2$};
        \draw (25.3, 1.25) circle (0cm) node[anchor=south]{$a_5^1$};
        \draw (27.4, 1.25) circle (0cm) node[anchor=south]{$a_5^2$};
        \draw (29.3, 1.25) circle (0cm) node[anchor=south]{$a_6^1$};
        \draw (32.7, 1.25) circle (0cm) node[anchor=south]{$a_6^2$};

        \draw (2.8, -1.25) circle (0cm) node[anchor=south]{$b_1^1$};        
        \draw (4.4, -1.25) circle (0cm) node[anchor=south]{$b_1^2$};      
        \draw (7.3, -1.25) circle (0cm) node[anchor=south]{$b_1^3$};     
        \draw (10.3, -1.25) circle (0cm) node[anchor=south]{$b_2^1$};
        \draw (11.9, -1.25) circle (0cm) node[anchor=south]{$b_2^2$};
        \draw (14.8, -1.25) circle (0cm) node[anchor=south]{$b_2^3$};
        \draw (17.3, -1.25) circle (0cm) node[anchor=south]{$b_3^1$};
        \draw (19.4, -1.25) circle (0cm) node[anchor=south]{$b_3^2$};
        \draw (21.3, -1.25) circle (0cm) node[anchor=south]{$b_4^1$};
        \draw (23.4, -1.25) circle (0cm) node[anchor=south]{$b_4^2$};
        \draw (25.3, -1.25) circle (0cm) node[anchor=south]{$b_5^1$};
        \draw (27.4, -1.25) circle (0cm) node[anchor=south]{$b_5^2$};
        \draw (29.3, -1.25) circle (0cm) node[anchor=south]{$b_6^1$};
        \draw (32.7, -1.25) circle (0cm) node[anchor=south]{$b_6^2$};

        \draw (2.8, -4.25) circle (0cm) node[anchor=south]{$c_1^1$};       
        \draw (4.4, -4.25) circle (0cm) node[anchor=south]{$c_1^2$};       
        \draw (7.3, -4.25) circle (0cm) node[anchor=south]{$c_1^3$};     
        \draw (10.3, -4.25) circle (0cm) node[anchor=south]{$c_2^1$};
        \draw (11.9, -4.25) circle (0cm) node[anchor=south]{$c_2^2$};
        \draw (14.8, -4.25) circle (0cm) node[anchor=south]{$c_2^3$};
        \draw (17.3, -4.25) circle (0cm) node[anchor=south]{$c_3^1$};
        \draw (19.4, -4.25) circle (0cm) node[anchor=south]{$c_3^2$};
        \draw (21.3, -4.25) circle (0cm) node[anchor=south]{$c_4^1$};
        \draw (23.4, -4.25) circle (0cm) node[anchor=south]{$c_4^2$};
        \draw (25.3, -4.25) circle (0cm) node[anchor=south]{$c_5^1$};
        \draw (27.4, -4.25) circle (0cm) node[anchor=south]{$c_5^2$};
        \draw (29.3, -4.25) circle (0cm) node[anchor=south]{$c_6^1$};
        \draw (32.7, -4.25) circle (0cm) node[anchor=south]{$c_6^2$};

        \draw (5, -8.3) circle (0cm) node[anchor=south]{$d_1$};      
        \draw (12.7, -8.3) circle (0cm) node[anchor=south]{$d_2$};
        \draw (19.2, -8.3) circle (0cm) node[anchor=south]{$d_3$};
        \draw (23.2, -8.3) circle (0cm) node[anchor=south]{$d_4$};
        \draw (27.2, -8.3) circle (0cm) node[anchor=south]{$d_5$};
        \draw (31.2, -8.3) circle (0cm) node[anchor=south]{$d_6$};

        \draw (4, 8.5) circle (0cm) node[anchor=south]{$u_1$};
        \draw (32, 8.5) circle (0cm) node[anchor=south]{$u_2$};

        \draw (5, 13) circle (0cm) node[anchor=south]{$P_1$};
        \draw (31, 13) circle (0cm) node[anchor=south]{$P_2$};
        
        \draw (17.75, -10.5) circle (0cm) node[anchor=south]{$G$};

        \draw [decorate, decoration = {brace}]  (3.25, 12.5) -- (6.75, 12.5);
        \draw [decorate, decoration = {brace}]  (29.25, 12.5) -- (32.75, 12.5);

    \end{tikzpicture}
    \caption{Reduced instance of the SD problem constructed from the MRSS problem instance $S =\{(1,2), (2,1), (1,1)\}$, $t= (3,3), k = 2$ and $m = 2$. Here, $S'$ = $\{(2,3), (3,2), (2,2), (2,2), (2,2), (2,2)\}$ and $t' = (5,5)$.}
    \label{fig: Fig2}
\end{figure}
\begin{lemma}~\label{equal}
    ($k, m, S, t$) is a yes instance of the MRSS problem if and only if ($k, m, S', t'$) is a yes instance of the MRSS problem.
\end{lemma}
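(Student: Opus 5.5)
The plan is to prove both directions directly by manipulating the chosen subsets. Write $\mathbf{1}$ for the all-ones vector and $\mathbf{2}$ for the all-twos vector in $\mathbb{N}^k$; the vector $(2,2)$ in the construction is $\mathbf{2}$ for general $k$. Then $S' = \{s+\mathbf{1} : s \in S\} \cup \{\max(t) \text{ copies of } \mathbf{2}\}$ and $t' = t + m\mathbf{1}$. I will use two preliminary assumptions, and the main risk of the argument lies in them.

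\emph{First assumption: $m \le n$.} This is without loss of generality. If $m > n$, then $(k,m,S,t)$ is equivalent to $(k,n,S,t)$, since no subset of $S$ has more than $n$ elements. So we may replace $m$ by $\min(m,n)$ before running the construction.

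\emph{Second assumption: every coordinate of every $s \in S$ is at least $1$.} This means reading $\mathbb{N}$ as the positive integers. Without it the statement can fail. For instance, take $k=1$, $S=\{0\}$, $t=1$, $m=1$. The original instance is a no instance. But $S'$ contains the vector $2 \ge t' = 2$, so the new instance is a yes instance.

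\textbf{Forward direction.} Let $\mathcal{S} \subseteq S$ with $|\mathcal{S}| = p \le m$ and $\sum_{s\in\mathcal{S}} s \ge t$. Because $p \le m \le n$, we can add $m-p$ further vectors of $S \setminus \mathcal{S}$ to obtain a set $\mathcal{T}$ with exactly $m$ elements. All coordinates are nonnegative, so $\sum_{s\in\mathcal{T}} s \ge t$ still holds. Now take $\mathcal{T}' = \{s+\mathbf{1} : s \in \mathcal{T}\} \subseteq S'$. It has size $m$ and
\[
\sum_{s\in\mathcal{T}} (s+\mathbf{1}) = \sum_{s\in\mathcal{T}} s + m\mathbf{1} \ge t + m\mathbf{1} = t'.
\]

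\textbf{Backward direction.} Let $\mathcal{S}' \subseteq S'$ with $|\mathcal{S}'| \le m$ and $\sum \mathcal{S}' \ge t'$. Suppose $\mathcal{S}'$ consists of $q$ copies of $\mathbf{2}$ together with shifted vectors $\{s+\mathbf{1} : s\in R\}$ for some $R \subseteq S$, so that $q + |R| \le m \le n$.

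The first step is to remove all copies of $\mathbf{2}$. By the second assumption, every shifted vector $s+\mathbf{1}$ is at least $\mathbf{2}$ coordinatewise. Since $|R| < n$ whenever $q > 0$, there is an unused $s \in S\setminus R$. Replacing one copy of $\mathbf{2}$ by this $s+\mathbf{1}$ does not decrease the sum and does not change the size. Repeating this, we may assume $q = 0$.

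Then $\sum_{s\in R} s + |R|\,\mathbf{1} \ge t + m\mathbf{1}$. Because $|R| \le m$, this gives $\sum_{s\in R} s \ge t + (m-|R|)\mathbf{1} \ge t$. Hence $R$ is a solution of size at most $m$ for the original instance.

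\textbf{Main obstacle.} The key step is eliminating the padding vectors $\mathbf{2}$ in the backward direction. It is exactly where positivity of the entries and the normalization $m \le n$ are needed. I would make these two assumptions explicit; they are harmless for Theorem~\ref{mrss}, because the hard instances there can be taken with positive entries, for example by shifting and rescaling. The computations themselves are routine.
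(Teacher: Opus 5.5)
Your argument is correct and follows the same route as the paper: you shift every vector by $\mathbf{1}$, raise the target by $m$, and note that the padding copies of $\mathbf{2}$ are dominated by unused shifted vectors. The paper's proof is only a sketch that silently uses $s \ge \mathbf{1}$ (its claim that shifted vectors have all coordinates at least $2$) and the existence of unused vectors for padding and replacement, so your explicit hypotheses are exactly what the lemma needs. In fact $m \le n$ is necessary and not merely convenient: for $k=1$, $S=\{1\}$, $t=3$, $m=3$, the three copies of $2$ reach $t'=6$ although the original instance is a no instance.
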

\begin{proof}
    By construction, the vectors in $S'$ obtained from $S$ have all coordinates at least $2$. Consequently, the additional $\max(t)$ copies of vector $(2,2)$ introduced in the construction of $S'$ do not affect any feasible solution. Moreover, since each coordinate of the vectors in $S$ is increased by one, the target vector $t$ must be adjusted accordingly by increasing each of its coordinates by $m$, yielding $t'$. Therefore, we conclude that ($k, m, S, t$) is a yes instance of the MRSS problem if and only if ($k, m, S', t'$) is a yes instance of the MRSS problem. \qed
\end{proof}

From Lemma \ref{equal}, it follows that ($k, m, S, t$) and ($k, m, S', t'$) are equivalent instances of the MRSS problem. Therefore, for the rest of this section, we consider the instance ($k, m, S', t'$) of the MRSS problem.
\begin{lemma}\label{sec4forward}
    \textit{If ($k, m, S', t'$) is a yes instance of }the MRSS problem\textit{ then there exists a signed dominating function $f$ of weight at most $k'$.}
\end{lemma}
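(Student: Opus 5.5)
We will build $f$ directly from a solution $\mathcal{S} \subseteq S'$ with $|\mathcal{S}| \leq m$ and $\sum_{s \in \mathcal{S}} s \geq t'$. Every $u_j$ and every vertex of $P_j$ gets label $1$; Lemma~\ref{obs: O1} forces this anyway, since the vertices of $P_j$ are pendants. For each $s_i \in S'$, write $M_i = \max(s_i)$. The gadget of $s_i$ is labeled according to whether $s_i \in \mathcal{S}$:
\begin{itemize}
\item \emph{unselected} ($s_i \notin \mathcal{S}$): $a_i^l = -1$, $b_i^l = 1$, $c_i^l = 1$ for all $l$, and $d_i = -1$, giving gadget weight $M_i - 1$;
\item \emph{selected} ($s_i \in \mathcal{S}$): $a_i^l = 1$, $b_i^l = -1$, $c_i^l = 1$ for all $l$, and $d_i = 1$, giving gadget weight $M_i + 1$.
\end{itemize}
The total weight is then $\sum_j |P_j| + k + \sum_{s_i \in S'}(M_i - 1) + 2|\mathcal{S}| \leq k'$.

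Next I verify positivity of all labelSums inside a gadget.
\begin{itemize}
\item Unselected gadget: $b_i^l$ has labelSum $-1+1+1 = 1$, and $c_i^l$ has $1+1-1 = 1$. The vertex $d_i$ has $M_i - 1 \geq 1$, because every coordinate of a vector in $S'$ is at least $2$.
\item Selected gadget: $b_i^l$ has $1-1+1 = 1$, $c_i^l$ has $-1+1+1 = 1$, and $d_i$ has $M_i + 1$.
\item The vertex $a_i^l$ has labelSum $f(a_i^l) + f(b_i^l) + (\text{number of its } u\text{-neighbours})$. This equals $0 + \#u$-neighbours in both cases.
\end{itemize}
So I need every $a_i^l$ to have at least one neighbour among the $u_j$. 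This holds because for a coordinate $j$ with $s_i(j) = M_i$, the vertex $u_j$ is adjacent to all $M_i$ vertices of $A_{s_i}$. This is the subtle point of the construction; if the "arbitrary" attachment were read otherwise, I would fix it by choosing the attachment so that such a $u_j$ covers $A_{s_i}$.

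The pendant vertices of $P_j$ have labelSum $2$. The main computation is for $u_j$. Its neighbours are $P_j$, contributing $\sum_{s_i \in S'} s_i(j) - 2t'(j)$, and $s_i(j)$ vertices of each $A_{s_i}$. These contribute $+s_i(j)$ if $s_i$ is selected and $-s_i(j)$ otherwise. Hence the labelSum of $u_j$ is
\[
1 + \sum_{s_i \in S'} s_i(j) - 2t'(j) + \sum_{s_i \in \mathcal{S}} s_i(j) - \sum_{s_i \notin \mathcal{S}} s_i(j) = 1 + 2\Big(\sum_{s_i \in \mathcal{S}} s_i(j) - t'(j)\Big) \geq 1,
\]
using $\sum_{s \in \mathcal{S}} s \geq t'$. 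Thus $f$ is a signed dominating function of weight at most $k'$.

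The only step I expect to need care is the $a$-vertex check, i.e.\ that each $a_i^l$ has a $u$-neighbour. Everything else is bookkeeping.
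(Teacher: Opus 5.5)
Your proposal is correct and is essentially the paper's proof: the same labeling ($u_j$, $P_j$ and all $c_i^l$ labeled $1$; selected gadgets with $a=1$, $b=-1$, $d=1$; unselected gadgets with $a=-1$, $b=1$, $d=-1$) and the same gadget-by-gadget weight count giving $\sum_j |P_j| + k + \sum_{s_i\in S'}(\max(s_i)-1) + 2|\mathcal{S}| \le k'$. Two of your steps are more rigorous than the paper, which only asserts them: the exact labelSum $1+2\bigl(\sum_{s_i\in\mathcal{S}} s_i(j)-t'(j)\bigr)$ for $u_j$, and the observation that a coordinate attaining $\max(s_i)$ makes some $u_j$ adjacent to all of $A_{s_i}$, which holds for any attachment, so your hedge about the ``arbitrary'' attachment is unnecessary.
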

\begin{proof}
     Let $R \subseteq S'$ such that $|R| \leq m$ and $\sum\limits_{s_i \in R}s_i \geq t'$. The corresponding signed dominating function $f$ is given as follows. \vspace{2mm} \\
     \[
f(w) = 
\begin{cases}
-1, & \text{if } w \in \bigcup\limits_{s_i \in R} B_{s_i} \cup \bigcup\limits_{s_i \notin R} (A_{s_i} \cup \{d_i\}), \\
1, & \text{if } w \in \bigcup\limits_{s_i \notin R} (B_{s_i} \cup C_{s_i}) \cup \bigcup\limits_{s_i \in R} (A_{s_i} \cup C_{s_i} \cup \{d_i\}) \cup \bigcup\limits_{j \in [k]} (P_j \cup \{u_j\}) \\
\end{cases}
\]
    \begin{itemize}
        \item For each $j \in [k]$, vertex $u_j$ has label 1 and each vertex in $P_j$ has label 1. This results in each vertex of $P_j$ having a labelSum of exactly two.
        \item For each $j \in [k]$, the negative weight from the neighbours of $u_j$ in $\bigcup\limits_{s_i \in S} A_{s_i}$ is at most the positive weight from $P_j$. As $u_j$ itself has label 1, the labelSum of $u_j$ remains positive.
        \item For $s_i \notin S$, and for each $l \in [\max(s_i)]$, the vertex $a_i^l$ has label $-1$, while $b_i^l$ and $c_i^l$ have label $1$. The vertex $d_i$ has label $-1$. Since all the neighbours of $a_i^l$, namely $b_i^l$ and the vertices in $\bigcup\limits_{j \in [k]} u_j$ have label $1$, the labelSum of $a_i^l$ is positive. The labelSum of $b_i^l$ is one, as its neighbours $a_i^l$ and $c_i^l$ have labels $-1$ and $1$, respectively. Similarly, the labelSum of $c_i^l$ is one, since its neighbours $b_i^l$ and $d_i$ have labels $1$ and $-1$, respectively. Finally, the labelSum of $d_i$ is positive, as all of its neighbours $\bigcup\limits_{l \in [\max(s_i)]} c_i^l$, have label 1.
        \item For $s_i \in S$, and for each $l \in [\max(s_i)]$, the vertices $a_i^l$ and $c_i^l$ have label $1$, while $b_i^l$ has label $-1$. The vertex $d_i$ has label $1$. Since all the neighbours of $a_i^l$ in $\bigcup\limits_{j \in [k]} u_j$ have label $1$, the labelSum of $a_i^l$ is positive. The labelSum of $b_i^l$ is one, as both of its neighbours $a_i^l$ and $c_i^l$ have label 1. Similarly, the labelSum of $c_i^l$ is one, since its neighbours $b_i^l$ and $d_i$ have labels $-1$ and $1$, respectively. Finally, the labelSum of $d_i$ is positive, as all of its neighbours $\bigcup\limits_{l \in [\max(s_i)]} c_i^l$, have label 1.
    \end{itemize}
The weight of $\bigcup\limits_{j \in [k]} u_j$ is $k$, that of $\bigcup\limits_{j \in [k]} P_j$ is $\sum\limits_{j  \in [k]} (\sum\limits_{s_i \in S'} s_i(j) - 2t'(j))$ and that of $\bigcup\limits_{i \in [n]}(A_{s_i} \cup B_{s_i} \cup C_{s_i} \cup \{d_i\})$ is $\max(s_i)-1$ if $s_i \notin R$, while it is $\max(s_i)+1$ if $s_i \in R$. Since each vertex in $G$ has a positive labelSum, we conclude that $f$ is a signed dominating function of weight at most $k' = \sum\limits_{j  \in [k]} (\sum\limits_{s_i \in S'} s_i(j) - 2t'(j))$+ $k +\sum\limits_{s_i \in S'}$ ($\max(s_i)-1$) + $2m$. \qed
\end{proof}
\begin{lemma} \label{lemma: L3}
    Let $f$ be a minimum weighed signed dominating function. For any $s_i \in S'$, the weight of $A_{s_i} \cup B_{s_i} \cup C_{s_i} \cup \{d_i\}$ is 
    \begin{enumerate}
    \item $\max(s_i)-1$ in $f$, if for each $l \in [\max(s_i)]$, $a_i^l$ is assigned label $-1$;
    \item $\max(s_i)+1$ in $f$, if for some $l \in [\max(s_i)]$, $a_i^l$ is assigned label $1$.
\end{enumerate}
\end{lemma}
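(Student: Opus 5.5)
The plan is to analyse the gadget $A_{s_i}\cup B_{s_i}\cup C_{s_i}\cup\{d_i\}$ locally. Write $M=\max(s_i)$; since every vector of $S'$ has all coordinates at least $2$, we have $M\ge 2$. The relevant closed neighbourhoods lie inside the gadget: $N[b_i^l]=\{a_i^l,b_i^l,c_i^l\}$, $N[c_i^l]=\{b_i^l,c_i^l,d_i\}$, and $N[d_i]=\{d_i\}\cup C_{s_i}$. So the labelSum conditions at $b_i^l$, $c_i^l$ and $d_i$ involve only gadget vertices. The only vertices of the gadget with outside neighbours are the $a_i^l$, which are adjacent to some $u_j$'s. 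Hence only the $a$-labels affect the outside, and only the $b$-labels (besides the $a$-labels themselves) affect the labelSum of an $a$-vertex.

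\textbf{Lower bound.} By the labelSum condition at $b_i^l$, at most one of $a_i^l,b_i^l,c_i^l$ has label $-1$, so each triple has weight at least $1$. If $d_i=-1$, the condition at $c_i^l$ forces $b_i^l=c_i^l=1$, so the triple has weight $f(a_i^l)+2$. This gives the following.
\begin{itemize}
\item In case 1 (all $a_i^l=-1$), $b_i^l=c_i^l=1$ is forced for every $l$ by the condition at $b_i^l$. The gadget weight is therefore $M+f(d_i)\ge M-1$.
\item In case 2, if $d_i=1$ the weight is at least $M+1$ directly. If $d_i=-1$, the triple containing $a_i^l=1$ has weight $3$ and every other triple has weight at least $1$, so the total is again at least $M-1+3-1=M+1$.
\end{itemize}
Additionally, the gadget has $3M+1$ vertices, so its weight has the parity of $M+1$. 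This rules out intermediate values and will simplify the upper-bound step.

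\textbf{Upper bound via minimality of $f$.} In case 1, suppose $d_i=1$. Flip $d_i$ to $-1$. Then $c_i^l$ still has labelSum $1+1-1=1$, and $d_i$ has labelSum $M-1\ge 1$. No other vertex is affected, so we obtain a lighter signed dominating function, contradicting minimality. Hence the weight is exactly $M-1$.

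In case 2, I would argue by exchange while keeping all $a$-labels and $b$-labels fixed. Fixing these guarantees that the labelSums of the $a$-vertices, and hence of all $u_j$ and vertices of $P_j$, are unchanged. Only the $c$-labels and $d_i$ are re-chosen so that the gadget weight drops to $M+1$ while the conditions at $b_i^l$, $c_i^l$ and $d_i$ still hold. The natural attempt is as follows. Set $d_i=1$. For each path, set $c_i^l=-1$ whenever $a_i^l=b_i^l=1$, and $c_i^l=1$ otherwise (if $b_i^l=-1$ then $c_i^l=1$ is forced anyway). Then check the resulting weight, and check that the labelSum $1+\sum_l f(c_i^l)$ of $d_i$ stays positive. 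Where these two requirements conflict (too many $c$'s at $-1$), trade them: restore some $c$'s to $1$, or set $d_i=-1$ together with all $c$'s equal to $1$.

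\textbf{Main obstacle.} The hard step is this last case-2 exchange: showing that a minimum $f$ cannot have gadget weight $M+3$ or more. The difficulty is that $b$-labels cannot be changed freely, since a vertex $a_i^l=1$ may rely on $b_i^l=1$ for its own labelSum when its $u_j$-neighbours are negative. I would first try to handle this using the parity observation together with the above trade-off between $d_i$ and the number of $c$'s at $-1$. If that fails, I would strengthen the local argument by using that, in a minimum solution, the $u_j$ vertices can be assumed to have label $1$ (because of the large pendant-like sets $P_j$). The lemma would then reduce to the clean labelling used in Lemma~\ref{sec4forward}.
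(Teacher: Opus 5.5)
Your case~1 analysis and your lower bound of $\max(s_i)+1$ in case~2 are correct, and they follow the paper's route: the labelSum at $b_i^l$ forces $b_i^l=c_i^l=1$ when $f(a_i^l)=-1$, and minimality forces $d_i=-1$. Note that the paper's own case-2 argument only concludes ``at least $\max(s_i)+1$''. That is all Lemma~\ref{sec4backward} uses.

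The genuine gap is the case-2 upper bound, which you leave open. Your primary plan of freezing the $b$-labels cannot work. Suppose $f(a_i^l)=f(b_i^l)=1$ for every $l$, and let $M=\max(s_i)\ge 2$. Then every choice of $c$- and $d$-labels gives gadget weight at least $M+3$:
\begin{itemize}
\item with $d_i=1$, at most $\lfloor M/2\rfloor$ of the $c_i^l$ can be $-1$, giving $2M+1+(M \bmod 2)$;
\item with $d_i=-1$, all $c_i^l$ must be $1$, giving $3M-1$.
\end{itemize}
So no exchange on $C_{s_i}\cup\{d_i\}$ yields a contradiction, and parity cannot exclude $M+3\equiv M+1$.

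The missing idea is the one you list only as a fallback. It should be used as a fact, not an assumption, together with one further observation.
\begin{itemize}
\item Each vertex of $P_j$ is pendant and $P_j\neq\emptyset$, so Lemma~\ref{obs: O1} forces $f(u_j)=1$ in \emph{every} signed dominating function. The size of $P_j$ is irrelevant.
\item Every $a_i^l$ is adjacent to $u_{j^*}$ for a coordinate $j^*$ with $s_i(j^*)=\max(s_i)$, since that $u_{j^*}$ is adjacent to all of $A_{s_i}$.
\end{itemize}
So your worry that $a_i^l$ might rely on $b_i^l=1$ because its $u_j$-neighbours are negative never arises. If $f(a_i^l)=1$, its labelSum is at least $1+f(b_i^l)+1\ge 1$ whatever $b_i^l$ is.

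Now relabel as follows. Keep all $a$-labels. Set $b_i^l=-1$ when $f(a_i^l)=1$ and $b_i^l=1$ otherwise. Put $c_i^l=1$ for all $l$ and $d_i=1$. Check the result:
\begin{itemize}
\item every $b_i^l$ has labelSum exactly $1$, and every $c_i^l$ has labelSum at least $1$;
\item $d_i$ has labelSum $M+1$, and every $a_i^l$ keeps a positive labelSum;
\item no vertex outside the gadget is adjacent to $B_{s_i}\cup C_{s_i}\cup\{d_i\}$, so all other labelSums are unchanged.
\end{itemize}
Each path contributes exactly $1$, so the total weight is $M+1$. Minimality of $f$ then caps the gadget weight at $M+1$. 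Combined with your lower bound, this proves case~2 as stated.
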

\begin{proof}
    We distinguish two cases according to the labels of the vertices in $A_{s_i}$.
    \begin{enumerate}
        \item Suppose $f(a_i^l)=-1$ for all $l \in [\max(s_i)]\}$. For $b_i^l$ to have a positive labelSum, we must have $f(b_i^l) +f(c_i^l) \geq 2$ and hence $f(b_i^l) = f(c_i^l) = 1$. To obtain the minimum weight, the vertex $d_i$ is assigned label $-1$. This yields a weight of $\max(s_i)-1$ for $A_{s_i} \cup B_{s_i} \cup C_{s_i} \cup \{d_i\}$.
        \item Now suppose that \(f(a_i^l) = 1\) for some \(l \in [\max(s_i)]\). In this case, to ensure that the vertex \(c_i^l\) has a positive labelSum, at most one of the vertices \(b_i^l\), \(c_i^l\), and \(d_i\) may be labeled \(-1\), with the remaining two are assigned label \(1\). This increases the weight of \(A_{s_i} \cup B_{s_i} \cup C_{s_i} \cup \{d_i\}\) by at least two compared to the previous case. Therefore, the weight of \(A_{s_i} \cup B_{s_i} \cup C_{s_i} \cup \{d_i\}\) is at least \(\max(s_i)+1\).
    \end{enumerate} \qed
\end{proof}
\begin{lemma} \label{sec4backward}
    \textit{If there exists a signed dominating function $f$ of weight at most $k'$ then ($k, m, S', t'$) is a yes instance of }the MRSS problem.
\end{lemma}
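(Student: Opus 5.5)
We prove the statement by a weight-counting argument. Without loss of generality we take $f$ to be a minimum-weight signed dominating function; its weight is still at most $k'$. We then let $R = \{ s_i \in S' : f(a_i^l) = 1 \text{ for some } l \in [\max(s_i)] \}$ and show that $|R| \leq m$ and $\sum_{s_i \in R} s_i \geq t'$.

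\textbf{Step 1: lower bounds on the weights of the parts.} The parts are the $u_j$'s, the $P_j$'s, and the gadgets $A_{s_i}\cup B_{s_i}\cup C_{s_i}\cup\{d_i\}$.
\begin{itemize}
\item Every vertex of $P_j$ is a pendant vertex adjacent to $u_j$. By Lemma~\ref{obs: O1}, all of $P_j$ and every $u_j$ receive label $1$. These vertices therefore contribute exactly $\sum_{j}(\sum_{s_i\in S'} s_i(j)-2t'(j)) + k$.
\item By Lemma~\ref{lemma: L3}, each gadget contributes at least $\max(s_i)-1$, and at least $\max(s_i)+1$ when $s_i\in R$. Strictly, case 2 of that lemma gives a lower bound, which is all we need here.
\end{itemize}
Summing these bounds gives total weight at least $k' - 2m + 2|R|$. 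Since the weight is at most $k'$, we obtain $|R|\leq m$.

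\textbf{Step 2: the coverage condition.} Fix $j\in[k]$ and consider the labelSum of $u_j$. Its closed neighbourhood consists of $u_j$, the set $P_j$ (all labelled $1$), and $\sum_{s_i\in S'} s_i(j)$ vertices of $\bigcup_i A_{s_i}$.

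For $s_i\notin R$, all $s_i(j)$ neighbours of $u_j$ in $A_{s_i}$ have label $-1$. For $s_i\in R$, at most $s_i(j)$ of them have label $1$. Let $p_j$ be the number of $A$-neighbours of $u_j$ labelled $+1$. Then
\[
\text{labelSum}(u_j) = 1 + |P_j| + p_j - \Big(\sum_{s_i} s_i(j) - p_j\Big) = 1 - 2t'(j) + 2p_j .
\]
Requiring this to be at least $1$ gives $p_j \geq t'(j)$. Since $p_j \leq \sum_{s_i\in R} s_i(j)$, we get $\sum_{s_i\in R}s_i(j)\geq t'(j)$ for every $j$. Thus $R$ is a solution for $(k,m,S',t')$.

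\textbf{Main obstacle.} The delicate point is the use of Lemma~\ref{lemma: L3} in Step 1. We need a \emph{lower} bound of $\max(s_i)+1$ on the gadget weight whenever a single $a_i^l$ is positive. If some $a_i^l=1$ and all other $a$'s are $-1$, then each other path forces $b=c=1$, and path $l$ forces two of $b_i^l, c_i^l, d_i$ to be $1$.

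To verify the bound, we count $f(b_i^{l'})+f(c_i^{l'})\geq 0$ for every $l'$ (forced by the labelSum of $b_i^{l'}$ when $a_i^{l'}=1$). We combine this with the observation that $f(d_i)=-1$ would force $f(b_i^{l})=f(c_i^{l})=1$ on any path whose $a$ is positive. A short case analysis on $f(d_i)$ then yields the bound $\max(s_i)+1$.

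We also use that the $A$-vertices are counted in the gadget weight. Their negative contribution is absorbed in the formula of Lemma~\ref{lemma: L3}, so no double counting with the $u_j$ accounting occurs.
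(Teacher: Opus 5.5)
Your proof is correct and follows the same route as the paper. Lemma~\ref{obs: O1} forces every $u_j$ and all of $P_j$ to label $1$, Lemma~\ref{lemma: L3} converts the budget $k'$ into at most $m$ gadgets containing a positive $a$-vertex, and the labelSum of $u_j$ gives the coverage condition. Your version is tighter than the paper's: you define $R$ by gadgets with \emph{some} positive $a_i^l$, derive $p_j \ge t'(j)$ and $p_j \le \sum_{s_i\in R} s_i(j)$ explicitly, and check that Lemma~\ref{lemma: L3}'s lower bound holds, whereas the paper argues loosely in terms of labelling all of $A_{s_i}$ with $1$.
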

\begin{proof}
    Let $f$ be a signed dominating function of weight $k' = \sum\limits_{j  \in [k]} (\sum\limits_{s_i \in S'} s_i(j) - 2t'(j))$+ $k +\sum\limits_{s_i \in S'}$ ($\max(s_i)-1$) + $2m$. 
    \begin{itemize}
    \item By Lemma \ref{obs: O1}, for each $j \in [k]$, we fix the label of $u_j$ as 1 and assign label $1$ to every vertex in $P_j$.
    \item After assigning labels to the vertices in $\bigcup\limits_{j \in [k]}(\{u_j\} \cup P_j)$, the vertex $u_j$ has labelSum $\sum\limits_{s_i \in S'} s_i(j) - 2t'(j)$. For $u_j$ to have a positive labelSum, we must assign label 1 to some of its neighbours in $\bigcup\limits_{s_i \in S'} A_{s_i}$. The labelSum of its neighbours in $\bigcup\limits_{s_i \in S'} A_{s_i}$ must be at least $2t'(j) - \sum\limits_{s_i \in S'} s_i(j)$.
    \item At this point, the remaining available weight for the set $\bigcup\limits_{s_i \in S'} (A_{s_i} \cup B_{s_i} \cup C_{s_i} \cup \{d_i\})$ is at most $\sum\limits_{s_i \in S'}$ ($\max(s_i)-1$) + $2m$. By Lemma \ref{lemma: L3}, for any $s_i \in S'$, if all the vertices in $A_{s_i}$ are labeled $-1$, then the minimum weight of $A_{s_i} \cup B_{s_i} \cup C_{s_i} \cup \{d_i\}$ is $\max(s_i)-1$. If instead, a subset of vertices in $A_{s_i}$ are labeled $1$, then the minimum weight increases to $\max(s_i)+1$, requiring an additional weight of two for each such set.
    \item Based on the remaining weight $\sum\limits_{s_i \in S'}$ ($\max(s_i)-1$) + $2m$ for $\bigcup\limits_{s_i \in S} (A_{s_i} \cup B_{s_i} \cup C_{s_i}\cup \{d_i\})$, we can afford to assign label $1$ to all vertices in $A_{s_i}$ for at most $m$ such sets.
    \item Hence, if there exist at most $m$ sets from $\bigcup\limits_{s_i \in S'}A_{s_i}$ such that assigning label $1$ to every vertex in each $A_{s_i}$ results in a positive labelSum for every vertex in $\bigcup\limits_{j \in [k]} u_j$, then there exist at most $m$ vectors in $\mathcal{S}$ whose sum is at least $t'$.
\end{itemize} \qed
\end{proof}

From Theorem \ref{mrss} and Lemmas \ref{sec4forward} and \ref{sec4backward}, we obtain the following theorem.
\begin{theorem}
    \sd{} problem is W[1]-hard parameterized by feedback vertex set number.
\end{theorem}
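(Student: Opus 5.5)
The proof assembles the reduction built above into a parameterized reduction from MRSS, parameterized by $k+m$, to \sd{}, parameterized by feedback vertex set number. By Theorem~\ref{mrss}, MRSS is W[1]-hard even when all numbers are in unary, so I will work with unary instances. Three things must then be checked: the construction runs in polynomial time, the new parameter is bounded by a function of $k+m$, and the instances are equivalent.

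For the running time, in unary every coordinate $s_i(j)$ and $t(j)$ is polynomial in the input size. So $S'$ has $n+\max(t)$ vectors, each $P_j$ has $\sum_{s_i\in S'} s_i(j)-2t'(j)$ vertices, and each $s_i$ contributes $3\max(s_i)+1$ path and $d_i$ vertices. All of these are polynomial, so $G$ and $k'$ are computed in polynomial time. The construction note already guarantees that each $P_j$ is nonempty, which is where the $\max(t)$ copies of $(2,2)$ are used. I read $(2,2)$ as the all-$2$ vector in $\mathbb{N}^k$.

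For the parameter, I will show that $F=\{u_1,\dots,u_k\}$ is a feedback vertex set. Removing $F$ leaves isolated vertices from each $P_j$. It also leaves, for each $s_i$, the gadget on $A_{s_i}\cup B_{s_i}\cup C_{s_i}\cup\{d_i\}$. That gadget is a tree: a star centered at $d_i$ with leaves $c_i^l$, where each $c_i^l$ extends through $b_i^l$ to $a_i^l$. These gadgets are pairwise disjoint and nonadjacent, since the only cross edges go through $F$. Hence $G-F$ is a forest, and the feedback vertex set number is at most $k$, which is bounded by $k+m$.

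For correctness, Lemma~\ref{equal} gives the equivalence of $(k,m,S,t)$ and $(k,m,S',t')$. Lemma~\ref{sec4forward} gives the forward direction, and Lemma~\ref{sec4backward} (relying on Lemmas~\ref{obs: O1} and~\ref{lemma: L3}) gives the backward direction. Chaining these shows that $(k,m,S,t)$ is a yes-instance if and only if $(G,k')$ is. Together with the parameter bound, this is a valid parameterized reduction, and W[1]-hardness transfers.

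The delicate step is the backward direction, in particular the accounting that converts the weight budget $k'$ into "at most $m$ gadgets have a positively labeled $A$-vertex". For each $j$, the closed neighbourhood of $u_j$ consists of $u_j$, all of $P_j$, and exactly $s_i(j)$ vertices of each $A_{s_i}$. Its labelSum is therefore $1+|P_j|+\sum_i\bigl(2p_i(j)-s_i(j)\bigr)$, where $p_i(j)$ counts the positive $A_{s_i}$-neighbours of $u_j$. Positivity forces $\sum_i p_i(j)\ge t'(j)$. The per-gadget lower bounds of Lemma~\ref{lemma: L3} then limit the number of gadgets $I$ containing any positive $A$-vertex to at most $m$. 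Since $p_i(j)\le s_i(j)$, we get $\sum_{i\in I}s_i(j)\ge t'(j)$, so $\{s_i : i\in I\}$ is the required subset. If needed, I would re-derive this inequality explicitly rather than rely on the phrasing in Lemma~\ref{sec4backward}.
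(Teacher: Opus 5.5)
\textbf{Gap: Lemma~\ref{equal} is false.} Your outline follows the paper's argument step for step. The parts you checked yourself are sound: $\{u_1,\dots,u_k\}$ is a feedback vertex set, and your count at $u_j$ correctly turns a signed dominating function of weight at most $k'$ into at most $m$ vectors of $S'$ summing to at least $t'$. The problem is the step you took on trust, Lemma~\ref{equal}. That lemma is false, so chaining it with Lemmas~\ref{sec4forward} and~\ref{sec4backward} does not give a valid reduction. The padding vector $(2,\dots,2)$ is the shifted image of the all-ones vector, which need not lie in $S$, so it is not harmless. Let $R\subseteq S'$ consist of the shifted copies of some $Q\subseteq S$ together with a set $R_2$ of padding copies, with $|R|\le m$ and $\sum_{s\in R}s\ge t'=t+m\mathbf{1}$. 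Then you only get $\sum_{s\in Q}s\ge t+(m-|Q|-2|R_2|)\mathbf{1}$, which in general is only $\ge t-|R_2|\mathbf{1}$. In fact every instance with $m=\max(t)$ is sent to a YES-instance, because the $m$ padding copies alone sum to $2m\mathbf{1}\ge t'$. Concretely, $k=2$, $S=\{(1,0),(2,0),(3,0)\}$, $t=(1,1)$, $m=1$ is a NO-instance, since no vector has a positive second coordinate. Yet $(2,2)\in S'$ meets $t'=(2,2)$. Here $|P_1|=7$, $|P_2|=1$, and every gadget has at least two paths. The labeling of Lemma~\ref{sec4forward} with $R=\{(2,2)\}$ is then a signed dominating function of weight exactly $k'=19$, so a NO-instance is mapped to a YES-instance.

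\textbf{How to repair it.} Make the padding vector the image of the \emph{zero} vector under the shift. Add $2$ (not $1$) to every coordinate of each $s\in S$, set $t'=t+2m\mathbf{1}$, and add $\max(t)+2m$ copies of $(2,\dots,2)$.
\begin{itemize}
\item Backward: $|R|\le m$ and $\sum_{s\in R}s\ge t'$ give $\sum_{s\in Q}s\ge t+2(m-|R|)\mathbf{1}\ge t$.
\item Forward: a solution of size $r\le m$ is completed by $m-r$ padding copies.
\item Sizes of the $P_j$: $|P_j|=\sum_{s\in S}s(j)+2n+2(\max(t)-t(j))\ge 2$. The paper's $\max(t)$ copies do not guarantee this, contrary to your claim that the construction note does. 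For $S=\{(1,0)\}$, $t=(1,1)$, $m=1$ one gets $|P_2|=-1$.
\item Gadgets: the shift by $2$ keeps $\max(s_i)\ge 2$ for every vector of $S'$, which Lemma~\ref{lemma: L3} needs. With a single path, $d_i$ is pendant, and the gadget has weight $2$ whether or not its $a$-vertex is positive. So padding with all-ones vectors under the shift by $1$ would also fail. For the same reason, a zero vector in $S$ already violates Lemma~\ref{lemma: L3} in the paper's construction unless zero vectors are discarded first.
\end{itemize}
With these changes, the rest of your argument goes through unchanged, including your own derivation of the backward direction.
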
 
The parameters treewidth and clique-width are bounded by a function of the parameter feedback vertex set number. With this direct relation between the parameters, we have the following theorem.
\begin{theorem} \label{thm: T2}
    \sd{} problem is W[1]-hard parameterized by treewidth or clique-width.
\end{theorem}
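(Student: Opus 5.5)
The statement follows from the previous theorem, W[1]-hardness by feedback vertex set number, because both treewidth and clique-width are bounded from above by a computable function of the feedback vertex set number. The plan is to show that a parameterized reduction for the feedback vertex set parameter remains one for these larger parameters. Concretely, I will take the instance $(G,k')$ built from an MRSS instance $(k,m,S,t)$ in the construction of this section, where $F=\{u_1,\dots,u_k\}$ is a feedback vertex set of size $k$. I will then bound $\mathrm{tw}(G)$ and $\mathrm{cw}(G)$ by functions of $k$ alone, so the new parameter is still bounded by a function of $k+m$. Together with Lemmas~\ref{sec4forward} and~\ref{sec4backward} and Theorem~\ref{mrss}, this yields the claim.

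First I verify that $G\setminus F$ is a forest. Removing the $u_j$ leaves three kinds of components: isolated vertices of $P_j$, and, for each $s_i\in S'$, the gadget on $A_{s_i}\cup B_{s_i}\cup C_{s_i}\cup\{d_i\}$. Each such gadget is a spider: $d_i$ is joined to each $c_i^l$, and each $c_i^l$ continues through $b_i^l$ to $a_i^l$. Hence $G\setminus F$ is a forest.

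The treewidth bound is the standard one. Take a tree decomposition of width at most $1$ of the forest $G\setminus F$, and add all of $F$ to every bag. This is still a valid decomposition of $G$, since every edge incident to some $u_j$ is covered in any bag containing its other endpoint. Thus $\mathrm{tw}(G)\le k+1$.

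For clique-width I will invoke the known bound $\mathrm{cw}(G)\le 3\cdot 2^{\mathrm{tw}(G)-1}$ (Corbett and Rotics), which gives $\mathrm{cw}(G)\le 3\cdot 2^{k}$. Alternatively, there is a direct argument: forests have clique-width at most $3$, and adding $k$ vertices with arbitrary adjacencies multiplies the number of labels by at most $2^{k}$, by tagging each label with its adjacency pattern to $F$. Either route bounds the parameter by a function of $k$.

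The main obstacle is only bookkeeping. The earlier reduction must run in polynomial time, which holds because the MRSS integers are in unary by Theorem~\ref{mrss}, so all of the sets $P_j$ and all of the paths have polynomial size. The new parameter must also depend only on $k+m$, which the bounds above guarantee. Composing the reduction with these bounds transfers W[1]-hardness to treewidth and to clique-width.
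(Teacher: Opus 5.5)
Your proposal is correct and takes the same route as the paper: the paper simply observes that treewidth and clique-width are bounded by a function of the feedback vertex set number and transfers the preceding W[1]-hardness result. You make this explicit on the constructed instance, checking that removing $\{u_1,\dots,u_k\}$ leaves a forest and deriving $\mathrm{tw}(G)\le k+1$ and $\mathrm{cw}(G)\le 3\cdot 2^{k}$. One small correction: the bound $\mathrm{cw}(G)\le 3\cdot 2^{\mathrm{tw}(G)-1}$ is due to Corneil and Rotics, not ``Corbett and Rotics''.
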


%% file: 5_nd.tex
\section{FPT algorithms for structural parameters}
In this section, we consider the structural parameters neighbourhood diversity and twin cover number, and present FPT algorithms.
\subsection{Neighbourhood diversity}
From Theorem \ref{thm: T2}, we have that the SD problem is W[1]-hard parameterized by clique-width. Here, we investigate its parameterized complexity with respect to more restrictive parameter, neighbourhood diversity, and present an FPT algorithm.

\medskip
For a graph $G = (V, E)$, \cite{lampis} showed that the neighbourhood diversity can be computed in polynomial-time. For the rest of this section, we assume that a partition of $V$ into $t$ sets $V_1, V_2, ..., V_t$ is given to us. By the definition of neighbourhood diversity, each set $V_i$ in the partition is either a clique or an independent set. 

\medskip
To establish fixed-parameter tractability, we transform the problem into an instance of Integer Linear Programming (ILP) problem, which is known to be FPT in the number of variables. The existing result related to the ILP problem is given as follows.
\begin{theorem} [\cite{FLWS}] ~\label{theoremilp}
    The $p$-variable Integer Linear Programming problem can be solved using $\mathcal{O}(p^{2.5p+o(p)} \cdot L \cdot \log(MN))$ arithmetic operations and space polynomial in $L$, where $L$ is the number of bits in the input, $N$ is the maximum absolute value any variable can take, and $M$ is an upper bound on the absolute value of the minimum taken by the objective function.
\end{theorem}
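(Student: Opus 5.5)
This statement is the classical result on integer programming in fixed dimension (Lenstra's algorithm with Kannan's improvements and Frank--Tardos preprocessing, as presented in \cite{FLWS}). It is imported rather than proved in this paper, so I only sketch how I would reconstruct it. The overall plan has two parts. First, reduce the optimization version to a sequence of feasibility questions. Second, solve each feasibility question by recursion on the dimension, guided by lattice basis reduction and the flatness theorem.

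\emph{Step 1: from optimization to feasibility.} Given the objective $c^{\top}x$ and constraints $Ax\le b$ in $p$ variables, I add the constraint $c^{\top}x\le \beta$ and binary search on $\beta$ in the range $[-M,M]$. This costs $\mathcal{O}(\log M)$ feasibility calls. The bound $N$ on the absolute value of every variable lets me add box constraints $-N\le x_i\le N$, so the polytope is bounded. Accounting for this costs a $\log N$ factor, for example when certifying that the rational approximations used below have bounded size; together this gives the $\log(MN)$ term. Throughout, I apply the Frank--Tardos simultaneous Diophantine approximation to the constraint coefficients. This replaces them by equivalent integers of bit-length polynomial in $p$, which keeps the arithmetic-operation count linear in $L$ and the space polynomial in $L$.

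\emph{Step 2: feasibility in dimension $p$.} For the polytope $P=\{x: Ax\le b\}$, I would compute an approximate Löwner--John ellipsoid $E$ with $E\subseteq P\subseteq \gamma E$, where $\gamma=p^{\mathcal{O}(1)}$, using a shallow-cut ellipsoid method. I then transform $E$ into a ball and apply LLL/Kannan (HKZ) reduction to the transformed lattice $\mathbb{Z}^p$. This yields one of two outcomes:
\begin{itemize}
\item[(i)] a lattice point of $P$, obtained by rounding the centre using the reduced basis; or
\item[(ii)] a primitive direction $d$ with lattice width $\max_{x\in P}d^{\top}x-\min_{x\in P}d^{\top}x\le p^{\mathcal{O}(1)}$.
\end{itemize}
In case (ii) I branch on the $p^{\mathcal{O}(1)}$ possible integer values of $d^{\top}x$. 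Each branch gives a $(p-1)$-dimensional subproblem, which I solve recursively.

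\emph{Step 3: the running time.} The naive Lenstra recursion gives $2^{\mathcal{O}(p^3)}$ branches. To reach $p^{\mathcal{O}(p)}$, I would use Kannan's refinement. Instead of branching on one hyperplane family at a time, it branches on a whole reduced-basis suffix, so that the product of the branching factors over all levels telescopes to $p^{\mathcal{O}(p)}$. Sharper flatness constants then push the bound to $p^{2.5p+o(p)}$.

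I expect this step to be the main obstacle. One has to prove that the HKZ-reduced basis has Gram--Schmidt lengths satisfying $\prod_i (\text{width}_i)\le p^{(2.5+o(1))p}$. This is a delicate estimate relying on Minkowski's second theorem and Banaszczyk-type transference bounds. Everything else, namely the binary search, the Frank--Tardos bit-size control and the ellipsoid rounding, is standard bookkeeping that yields the $L\cdot\log(MN)$ factor and the polynomial space bound.
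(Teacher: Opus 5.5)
The paper gives no proof of this statement. It is quoted from \cite{FLWS}, where it is the optimization counterpart of the Lenstra--Kannan--Frank--Tardos feasibility bound of $\mathcal{O}(p^{2.5p+o(p)}\cdot L)$ arithmetic operations, and the paper uses it only as a black box for the ILPs in the neighbourhood-diversity and twin-cover sections. Your Step~1 (binary search on the objective value over $[-M,M]$ with a feasibility oracle) is the standard passage between the two versions. Steps~2--3 reconstruct the cited feasibility theorem, so treating the statement as imported is consistent with the paper.

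If you want the reconstruction itself to be accurate, two points need fixing.

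\emph{Steps~2 and~3 contradict each other.} Suppose every level really yields a flat direction of width $p^{\mathcal{O}(1)}$. That needs an HKZ or exact shortest-dual-vector computation; LLL only achieves it up to a factor exponential in $p$. Then the plain one-direction-per-level recursion already gives $(p^{\mathcal{O}(1)})^{p}=p^{\mathcal{O}(p)}$ subproblems. The $2^{\mathcal{O}(p^3)}$ count you quote instead belongs to Lenstra's LLL-based analysis, which has $2^{\mathcal{O}(p^2)}$ hyperplanes per level.

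\emph{The precise bound is asserted rather than derived.} This applies both to the exponent $2.5p+o(p)$ and to the multiplicative $\log N$. Adding the box constraints $-N\le x_i\le N$ only increases $L$ additively. ``Certifying bounded size'' is not a step you have tied to any concrete number of arithmetic operations.

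Neither point affects the paper, which only invokes the statement. As written, though, your sketch does not establish the stated bound.
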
 
 We now present the details for constructing an ILP formulation for the SD problem parameterized by neighborhood diversity.
\begin{itemize}
    \item Let $V_1, V_2, ..., V_t$ denote a partition of the vertex set $V$ into corresponding sets. We use $C$ to denote the union of cliques and $I$ to denote the union of independent sets among $V_1, V_2, ..., V_t$.
    \item Let $f$ be a minimum weighted signed dominating function. For each partition $V_i$, we define two boolean variables $a_i$ and $b_i$, where $a_i = 1$ indicates that there exists a vertex in $V_i$ with label $-1$ under $f$ and $b_i = 1$ indicates that there exists a vertex in $V_i$ with label $1$ under $f$.
    \item For each partition $V_i$, we guess whether there exists a vertex with label $-1$. If such a vertex exists, we set $a_i = 1$. Similarly, we guess whether there exists a vertex with label 1. If such a vertex exists, we set $b_i = 1$. If $a_i= 0$ and $b_i =0$ for any partition, we reject the guess.
    \item For each partition $V_i$, we define a variable $x_i$ representing the sum of the labels of all vertices in the partition. For some partitions, the values of $x_i$ do not form a continuous range. To model the corresponding lower and upper bounds as a continuous range in the ILP formulation, we introduce an auxiliary variable $y_i$ for such partitions.
\end{itemize}

    \medskip
    
\noindent The ILP formulation for the SD problem is given as follows.
\begin{tcolorbox}
{
    Minimize 
    $\sum\limits_{i: a_i= 1, b_i=1, |V_i| \equiv 0 \pmod 2} 2y_i$ +
    $\sum\limits_{i: a_i= 1, b_i=1, |V_i| \equiv 1 \pmod 2}$ $(2y_i+1)$
    
    \noindent Subject to
    \begin{enumerate}
    \item $x_i + A \geq 1$, for each $V_i \in C$,
    \item $-1 + A \geq 1$, for each $V_i \in I$ with $a_i = 1$ and $b_i = 0$,
    \item $1 + A \geq 1$, for each $V_i \in I$ with $a_i = 0$ and $b_i= 1$,
    \item $\left \lfloor \frac{-|V_i|}{2} \right \rfloor +1\leq y_i \leq \left \lfloor \frac{|V_i|}{2} \right \rfloor -1$, for each $V_i$ with $a_i = 1$ and $b_i = 1$,
    \end{enumerate}
    Where $A$ =  $-\sum\limits_{j: V_j \in N(V_i), a_j = 1, b_j = 0} |V_j|$ + 
    $\sum\limits_{j: V_j \in N(V_i), a_j = 0, b_j=1} |V_j|$ + 
    $\sum\limits_{j: V_j \in N(V_i), a_j= 1, b_j=1, |V_j| \equiv 0 \pmod 2} 2y_j$ +
    $\sum\limits_{j: V_j \in N(V_i), a_j= 1, b_j=1, |V_j| \equiv 1 \pmod 2}$ $(2y_j+1)$
}
\end{tcolorbox} 
\noindent We now discuss the constraints given in the ILP formulation. 
    \medskip
    
\noindent 1. For each partition $V_i$ that is a clique, the sum of the labels of vertices in $V_i$ together with the sum of the labels of vertices in its neighbouring partitions must be at least one. 
    \medskip
    
\noindent 2. For each partition $V_i$ that is an independent set, if $a_i = 1$ and $b_i$ = 0, then there exists a vertex in $V_i$ with label $-1$ and no vertex with label $1$.
To ensure a positive labelSum for all the vertices in $V_i$, $-1$ plus the sum of the labels of the vertices in its neighboring partitions must be at least one. 
    \medskip
    
\noindent 3. For each partition $V_i$ that is an independent set, if $a_i = 0$ and $b_i = 1$, then there exists a vertex in $V_i$ with label 1 and no vertex with label $-1$. To guarantee a positive labelSum for all the vertices in $V_i$, the value $1$ plus the sum of the labels of the vertices in its neighbouring partitions must be at least one. 
    \medskip
    
\noindent 4. For each partition $V_i$ with $a_i = 1$ and $b_i = 1$, the value of $x_i$ is minimum if there are $|V_i|-1$ vertices with label $-1$ and one vertex with label $1$. Therefore, the lower bound for $x_i$ is ${-|V_i|} +2$. The value of $x_i$ is maximum if there are $|V_i|-1$ vertices with label $1$ and one vertex with label $-1$. Hence, the upper bound for $x_i$ is ${|V_i|} -2$. 

If $|V_i|$ is even then $x_i$ can only take even values between ${-|V_i|} +2$ and ${|V_i|} -2$. Thus, we write $x_i$ = $2y_i$, where $y_i$ $ \in \left[\left \lfloor \frac{-|V_i|}{2} \right \rfloor +1, \left \lfloor \frac{|V_i|}{2} \right \rfloor -1 \right]$.  Whereas if $|V_i|$ is odd then $x_i$ can only take odd values between ${-|V_i|} +2$ and ${|V_i|} -2$. Thus, we write $x_i$ = $2y_i+1$, where $y_i$ $ \in \left[\left \lfloor \frac{-|V_i|}{2} \right \rfloor +1, \left \lfloor \frac{|V_i|}{2} \right \rfloor -1 \right]$.
    \medskip
    
Next, we describe how labels are assigned to the vertices in each partition $V_i$, based on the guesses for $a_i$ and $b_i$. 
\begin{itemize}
    \item If $a_i = 0$ and $b_i = 1$, we assign label 1 to every vertex in $V_i$. 
    \item If $a_i = 1$ and $b_i = 0$, we assign label $-1$ to every vertex in $V_i$. 
\end{itemize}
    For all other partitions, labels are assigned based on the outcome of the ILP formulation. 
\begin{itemize}
    \item If $a_i = 1$ and $b_i = 1$, we assign labels $\{-1, 1\}$ to the vertices in $V_i$ such that the weight of $V_i$ is exactly $x_i$ and there must exist two vertices $u$ and $v$ in $V_i$ such that $f(u) = -1$ and $f(v) = 1$.
\end{itemize}
In this way, we label all vertices of the partitions $V_1, V_2, ..., V_t$.

\medskip
In our ILP formulation, we have $t$ variables. The values of all the variables and the objective function are upper bounded by $n$. By Theorem \ref{theoremilp}, the problem can thus be solved in $\mathcal{O}^*(4^t \cdot t^{\mathcal{O}(t)})$ time.

\medskip
Hence, we arrive at the following theorem. 
\begin{theorem}
    SD problem is FPT parameterized by neighbourhood diversity.
\end{theorem}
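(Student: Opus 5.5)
The plan is to reduce \sd{} on a graph with a given type partition $V_1,\dots,V_t$ (computable in polynomial time by~\cite{lampis}) to at most $3^t$ integer linear programs with $\mathcal{O}(t)$ variables each, and solve each of them with Theorem~\ref{theoremilp}. The key structural observation is that the labelSum of a vertex depends only on its type and its own label. If $u,v\in V_i$, then every other type class $V_j$ is either fully adjacent to $V_i$ or fully non-adjacent to it. Inside $V_i$, the class is a clique or an independent set. For a clique class, $N[u]=N[v]$, so all vertices of $V_i$ have labelSum $x_i+\sum_{j:V_j\in N(V_i)}x_j$, where $x_j$ is the weight of $V_j$. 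For an independent class, the labelSum of $u$ is $f(u)+\sum_{j:V_j\in N(V_i)}x_j$. Hence any labeling can be summarised by the vector $(x_1,\dots,x_t)$, and a vector is realisable iff $x_i\equiv |V_i| \pmod 2$ and $-|V_i|\le x_i\le |V_i|$.

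For each class we guess one of three states: all labels $1$, all labels $-1$, or mixed. This gives $3^t$ guesses, which matches the guessing of $(a_i,b_i)$ above. In the mixed state, $x_i$ ranges over the values of the right parity in $[-|V_i|+2,\,|V_i|-2]$; we encode $x_i=2y_i$ or $x_i=2y_i+1$ with $y_i$ an integer in the corresponding interval. We then write constraints that are linear in the $y_i$:
\begin{itemize}
\item For a clique class: $x_i+\sum_{j:V_j\in N(V_i)}x_j\ge 1$.
\item For an independent class that contains a $-1$ (all $-1$ or mixed): $-1+\sum_{j:V_j\in N(V_i)}x_j\ge 1$. This is the binding constraint, since the $+1$ vertices are then automatically fine.
\item For an independent class with all labels $1$: $1+\sum_{j:V_j\in N(V_i)}x_j\ge 1$.
\end{itemize}
The objective is $\sum_i x_i$, with the fixed classes contributing the constants $\pm|V_i|$.

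Correctness then goes in two directions. Given an optimal signed dominating function, take its state vector and the induced $y_i$; this is feasible for the corresponding ILP with the same value. Conversely, from a feasible ILP solution we label each mixed class $V_i$ by giving $(|V_i|+x_i)/2$ vertices label $1$ and the rest $-1$; both counts are at least $1$ by the range constraint. By the observation above, every vertex then has labelSum at least $1$. Taking the minimum over all guesses returns $\gamma_s(G)$, which we compare with $k$.

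For the running time, each ILP has at most $t$ variables, and all coefficients and bounds are at most $n$. Theorem~\ref{theoremilp} therefore gives $t^{\mathcal{O}(t)}\cdot n^{\mathcal{O}(1)}$ per ILP, and $3^t\,t^{\mathcal{O}(t)}n^{\mathcal{O}(1)}$ in total. The step I expect to need the most care is the independent-set case: the constraint must be stated for the worst vertex of the class, namely one with label $-1$, rather than in terms of $x_i$. A second point needing care is that the parity of $x_i$ is handled exactly by the $y_i$ substitution, with the mixed range excluding $\pm|V_i|$. One should also treat single-vertex classes separately, since they cannot be mixed; the empty interval for $y_i$ handles this automatically.
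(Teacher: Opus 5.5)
Your proposal is correct and follows essentially the same route as the paper. Both guess, for each type class, whether it contains a $-1$ and/or a $1$ (the paper's $(a_i,b_i)$, your three states). Both encode the weight of mixed classes through the parity substitution $x_i=2y_i$ or $x_i=2y_i+1$ over the same range, write one domination constraint per class, and solve the resulting ILP with at most $t$ variables via Theorem~\ref{theoremilp}. Your constraint $-1+\sum_{j:V_j\in N(V_i)}x_j\ge 1$ for \emph{mixed} independent classes is genuinely needed, and the paper's formulation omits it: it only constrains independent classes with $(a_i,b_i)\in\{(1,0),(0,1)\}$. On this point your version is the more careful one.
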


%% file: 6_tc.tex
\subsection{Twin cover number}
We investigate the parameterized complexity with respect to the structural parameter twin cover number, and present an FPT algorithm.

\medskip
The parameter twin cover of size at most $k$ (if exists), can be obtained in FPT time, in the following way.
\begin{theorem}[{{\hspace{-1sp}\cite[Theorem 4]{GNN}}}] \label{thrm: T5_2} \textit{If a minimum twin cover in $G$ has size at most $k$, then it is possible to compute a twin cover of size at most $k$ in time $\mathcal{O}(|E|+k|V|+1.2738^k)$.} \end{theorem}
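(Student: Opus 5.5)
The plan is to reduce the computation of a twin cover to the computation of a vertex cover in an auxiliary graph, and then apply the known FPT algorithm for \textsc{Vertex Cover}. Let $E_T \subseteq E$ be the set of twin edges of $G$, that is, the edges $(u,v)$ with $N_G[u]=N_G[v]$, and let $G^{*}=(V, E\setminus E_T)$. The central observation is: a set $T\subseteq V$ is a twin cover of $G$ if and only if $T$ is a vertex cover of $G^{*}$.

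This equivalence follows directly from the definition. Every edge of $G$ must be either a twin edge or incident to $T$. The edges of $G^{*}$ are exactly the non-twin edges of $G$, and these are precisely the edges that must be incident to $T$. The twin-edge property is evaluated in $G$ and not in $G^{*}$, so deleting $E_T$ does not change which of the remaining edges count as twin edges. In particular, the minimum twin cover of $G$ has size at most $k$ exactly when $G^{*}$ has a vertex cover of size at most $k$, and any vertex cover of $G^{*}$ of size at most $k$ is a twin cover of $G$ of the required size.

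The algorithm has two phases. First, compute $E_T$. I would group the vertices into closed-twin classes, meaning the equivalence classes of the relation $N[u]=N[v]$. This can be done in $\mathcal{O}(|V|+|E|)$ time by partition refinement: start with the single class $V$ and refine it successively by the sets $N[w]$ for $w\in V$. The total refinement cost is $\sum_w |N[w]| = \mathcal{O}(|V|+|E|)$. Alternatively, one can sort the closed neighbourhood lists and compare them edge by edge via hashing. An edge $(u,v)$ is a twin edge exactly when $u$ and $v$ lie in the same class, so scanning $E$ once produces $G^{*}$ in $\mathcal{O}(|E|)$ time. Second, run the Chen--Kanj--Xia algorithm for \textsc{Vertex Cover} on $(G^{*},k)$. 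It first applies a kernelization running in $\mathcal{O}(k|V|)$ time, using high-degree rules and crown or LP reductions, and then a branching step running in $\mathcal{O}(1.2738^k)$ time on the kernel. It returns a vertex cover of size at most $k$ whenever one exists. Adding the two phases gives the total bound $\mathcal{O}(|E|+k|V|+1.2738^k)$.

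I expect the main obstacle to be keeping the twin-edge detection truly linear, so that it contributes only $\mathcal{O}(|E|)$ and not $\mathcal{O}(|E|\cdot\Delta)$, which is what naive pairwise comparison of neighbourhoods would cost. I would first try the partition-refinement approach. If that proves awkward, I would fall back on computing a modular decomposition, or on lexicographically sorting the adjacency lists with radix sort, since either recovers the closed-twin classes in linear time.
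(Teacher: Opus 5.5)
Your proof is correct and is exactly the reduction behind this cited result; the paper itself gives no proof and only cites Ganian. You delete the twin edges, observe that the twin covers of $G$ are precisely the vertex covers of $G^{*}$, and run the Chen--Kanj--Xia $\mathcal{O}(k|V|+1.2738^k)$ vertex cover algorithm. The one real addition is your linear-time partition refinement of $V$ by the sets $N[w]$: a naive per-edge comparison of closed neighbourhoods gives only an $\mathcal{O}(|E||V|)$ term, which is, to my recollection, the form in which Ganian's bound is usually quoted, so your refinement step is what actually justifies the $\mathcal{O}(|E|)$ term written in this statement.
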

Given a graph $G$, we partition the vertex set $V$ into sets $T$ and $C$, where $T$ represents the twin cover and $C$ is the union of clique sets outside $T$. A clique set is a union of cliques with the same adjacency in $T$. From here on, we use $t$ in place of $|T|$.

\medskip
Let $f$ be a minimum weighted signed dominating function and $P$ be the set of vertices in $T$ with the label 1 under $f$. We guess the set $P$ in $2^{t}$ ways. For a clique set $C_i$, let $y_i$ denote the sum of the labels of the vertices in $N(C_i) \cap T$. Let $K$ be a clique in $C_i$. If $|K|+y_i<1$, then no feasible solution exists, and we reject such guess.

\medskip
For each vertex $u \in T$, we define $z(u)$ to be the sum of the labels of its closed neighbours in $T$ and we use $M(u)$ to denote the indices of the clique sets $u$ is adjacent to. 
\medskip
\begin{lemma}\label{parity} Let $C_i$ be a clique set and $y_i$ be the sum of the labels assigned to the vertices in $N(C_i)\cap T$. let $K$ be a clique in $C_i$.
\begin{enumerate}
    \item If $y_i$ is odd, then
    \begin{itemize}
   \item when $|K|$ is odd, the weight of $K$ is an odd integer and satisfies
     $
     w(K)\ge 2-y_i,
     $
   \item when $|K|$ is even, the weight of $K$ is an even integer and satisfies
     $
     w(K)\ge 1-y_i.
     $
    \end{itemize}
 \item If $y_i$ is even, then
    \begin{itemize}
   \item when $|K|$ is odd, the weight of $K$ is an odd integer and satisfies
     $
     w(K)\ge 1-y_i,
     $
   \item when $|K|$ is even, the weight of $K$ is an even integer and satisfies
    $
     w(K)\ge 2-y_i.
     $
     
    \end{itemize}
\end{enumerate}
     \end{lemma}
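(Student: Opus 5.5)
The plan is to reduce the statement to two elementary facts: a parity fact about the weight of a clique, and the domination constraint at a single vertex of $K$.

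\emph{Parity.} Write $p$ for the number of vertices of $K$ labelled $1$ and $q$ for the number labelled $-1$. Then $|K| = p+q$ and $w(K) = p-q = |K| - 2q$. Hence $w(K) \equiv |K| \pmod 2$: the weight is odd when $|K|$ is odd and even when $|K|$ is even. This settles the parity assertion in all four cases.

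\emph{The closed neighbourhood of a vertex of $K$.} Since $T$ is a twin cover, every edge with both ends outside $T$ is a twin edge. So any $u \in K$ has no neighbours in $C$ outside its own clique $K$; such a neighbour would lie in $K$ by maximality, as cliques in $C$ are the components of $G[C]$. All vertices of $C_i$ share the same neighbourhood in $T$, namely $N(C_i)\cap T$. Therefore $N[u] = K \cup (N(C_i)\cap T)$, and the labelSum of $u$ equals $w(K) + y_i$. The signed domination condition then gives $w(K) \ge 1 - y_i$. Conversely, once $w(K) + y_i \ge 1$, every vertex of $K$ is satisfied, since all of them have the same closed neighbourhood.

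\emph{Combining the two.} The bound $w(K) \ge 1 - y_i$ is sharpened by parity, because $w(K)$ must have the same parity as $|K|$. If $1 - y_i$ already has that parity, the bound stays $1 - y_i$. Otherwise the smallest admissible value is $2 - y_i$. Now $1 - y_i$ is even exactly when $y_i$ is odd. So:
\begin{itemize}
\item $y_i$ odd, $|K|$ even: bound $1-y_i$;
\item $y_i$ odd, $|K|$ odd: bound $2-y_i$;
\item $y_i$ even, $|K|$ odd: bound $1-y_i$;
\item $y_i$ even, $|K|$ even: bound $2-y_i$.
\end{itemize}
These are exactly the four cases of the statement.

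The only step needing real care is the structural one: showing that $N[u]$ is exactly $K$ together with $N(C_i)\cap T$. This means confirming that distinct cliques of the same clique set are non-adjacent and that no vertex of $C$ outside $K$ reaches $u$. Both follow from the twin-edge definition: an edge $uv$ with $u,v\notin T$ forces $N[u]=N[v]$, so the components of $G-T$ are cliques. The rest is parity bookkeeping.
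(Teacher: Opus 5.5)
Your proposal is correct and follows essentially the same route as the paper: the parity identity $w(K)\equiv |K| \pmod 2$, the domination constraint $w(K)+y_i\ge 1$ at a vertex of $K$, and the same four-case parity sharpening to $1-y_i$ or $2-y_i$. You also justify explicitly that $N[u]=K\cup(N(C_i)\cap T)$ for $u\in K$ (twin edges force the components of $G-T$ to be cliques with identical closed neighbourhoods), a step the paper takes for granted.
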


\begin{proof}
Let $K$ be a clique in the clique set $C_i$. Since all the vertices of $K$ have the same neighborhood in $T$, every vertex $v\in K$ receives the same contribution $y_i$ from vertices outside $K$.

Let $w(K)$ denote the sum of the labels assigned to the vertices of $K$. We have, $w(K)+y_i \ge 1$. Hence, $w(K)\ge 1-y_i.$

\medskip
Next, let us assume that $a$ vertices of $K$ has label $1$ and $b$ vertices has label $-1$ under $f$, then
$w(K)=a-b$ and $|K|=a+b.$ Therefore, $w(K)\equiv |K| \pmod 2,$ so $w(K)$ has the same parity as $|K|$.

\medskip
Consequently, the minimum feasible value of $w(K)$ is the smallest
integer having the same parity as $|K|$ and at least $1-y_i$.

\begin{enumerate}
    \item If $y_i$ is odd, then $1-y_i$ is even.
    \begin{itemize}
        \item If $|K|$ is odd, the smallest odd integer at least
        $1-y_i$ is $2-y_i$.
        \item If $|K|$ is even, the smallest even integer at least
        $1-y_i$ is $1-y_i$.
    \end{itemize}

    \item If $y_i$ is even, then $1-y_i$ is odd.
    \begin{itemize}
        \item If $|K|$ is odd, the smallest odd integer at least
        $1-y_i$ is $1-y_i$.
        \item If $|K|$ is even, the smallest even integer at least
        $1-y_i$ is $2-y_i$.
    \end{itemize}
\end{enumerate} \qed

\end{proof}



\medskip

\medskip
\noindent
For every clique $K\in C_i$, let $\ell(K)$ denote the minimum feasible
weight of $K$ obtained from Lemma~\ref{parity}. Then feasible weight
for $K$ can be written as $
w(K)=\ell(K)+2q_K,$
where $q_K\geq 0$.

\medskip
For each clique set $C_i$, we define
$
L_i=\sum\limits_{K\in C_i}\ell(K).
$
Here, $L_i$ can be computed, once the guess of $P$ is fixed. 
Let
$
x_i=\sum\limits_{K\in C_i} q_K.
$
Then the total contribution of the clique set $C_i$ is
$
L_i+2x_i.
$

\medskip
The ILP formulation for the SD problem is given as follows.
\begin{tcolorbox}

Minimize 
$\sum\limits_{i=1}^{t} x_i$
\vspace{2mm}

Subject to
\begin{enumerate}
    \item $z(u)+
\sum\limits_{i\in M(u)}
\bigl(L_i+2x_i\bigr)
\ge 1,$ for each $u\in T$,
\item $0\le x_i\le m_i,$ for each $i\in [t]$,
\end{enumerate}
Where 
$
m_i=
\sum\limits_{K\in C_i}
\frac{|K|-\ell(K)}{2}.
$
\end{tcolorbox}

\noindent We now discuss the constraints given in the ILP formulation. 
    \medskip
    
\noindent 1. For every vertex $u \in T$, the sum of the labels of its closed neighbours in $T$ together with the sum of the labels of the vertices in the clique sets adjacent to $u$, must be at least one. Hence, we have that $z(u)+
\sum\limits_{i\in M(u)} \bigl(L_i+2x_i\bigr) \ge 1.$
    \medskip
    
\noindent 2. For each clique set $C_i$, the variable $x_i$ takes the values in the range $0$ and $m_i$, where $m_i$ is defined as half, over all the cliques $K \in C_i$, of the difference between the size of $K$ and $l(K)$.
    
    \medskip
\noindent Now we describe, how to assign labels to the vertices in $C_i$ under $f$, from the outcome of the ILP formulation.

\medskip
For every clique $K \in C_i$, we assign label $1$ to $l(K)$ vertices arbitrarily. We then label additional $2x_i$ vertices in $C_i$ with $1$ arbitrarily. All the other vertices in $C_i$ will be labeled $-1$.

\medskip
In our ILP formulation, we have $2^t$ variables. The values of all the variables and the objective function are upper bounded by $n$. By Theorem \ref{theoremilp}, the problem can thus be solved in $\mathcal{O}^*(2^t \cdot 2^{{\mathcal{O}(t \cdot 2^t)}})$ time.

\medskip
Hence, we obtain the following result. 
\begin{theorem}
    SD problem is FPT parameterized by twin cover number.
\end{theorem}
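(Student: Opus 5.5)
The plan is to show that \sd{} is FPT parameterized by the twin cover number $t$. We first compute a twin cover $T$ of size at most $t$ with Theorem~\ref{thrm: T5_2}. Then we guess the labels on $T$ and, for each guess, solve an integer program whose number of variables depends only on $t$.

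\textbf{Structure of $G - T$.} Every edge of $G - T$ is a twin edge, so each connected component of $G - T$ is a clique. All vertices of such a clique $K$ share the same neighbourhood in $T$. Indeed, if $u,v \in K$ are adjacent, then $N[u]=N[v]$, and twinness is transitive inside a component. We group the cliques into at most $2^t$ clique sets $C_i$, one for each subset of $T$ arising as a neighbourhood.

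\textbf{Guessing and the constraints on cliques.} We enumerate all $2^t$ choices of the set $P \subseteq T$ of vertices labelled $1$. Fix a guess and let $y_i$ be the label sum of $N(C_i)\cap T$. The closed neighbourhood of a vertex in $K \subseteq C_i$ is $K \cup (N(C_i)\cap T)$, so every vertex of $K$ has labelSum $w(K)+y_i$. The domination condition on $K$ is therefore exactly $w(K) \ge 1-y_i$, with $w(K)\equiv |K| \pmod 2$ and $|w(K)|\le |K|$. Lemma~\ref{parity} gives the least feasible value $\ell(K)$. If $\ell(K)>|K|$, we reject the guess.

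\textbf{Aggregation and the ILP.} Any feasible weight of $K$ has the form $\ell(K)+2q_K$ with $0\le q_K\le (|K|-\ell(K))/2$. The constraints for vertices $u\in T$ involve only the total weight of each clique set, not how it is split among the cliques. So with $x_i=\sum_{K\in C_i} q_K$ and $m_i$ as defined in the text, every integer $x_i\in[0,m_i]$ can be realised by some choice of the $q_K$. We then solve the ILP stated above: one variable $x_i$ per clique set, together with the constraints $z(u)+\sum_{i\in M(u)}(L_i+2x_i)\ge 1$ for $u\in T$.

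The objective to minimise is the total weight of the labelling: the weight of $T$, which is fixed by the guess, plus $\sum_i (L_i+2x_i)$. This is equivalent to minimising $\sum_i x_i$ for a fixed guess. Across guesses we take the best value found and compare it with $k$.

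\textbf{Correctness and running time.} For correctness in both directions we argue as follows. A signed dominating function of $G$ restricted to $T$ is one of the guesses, and its clique weights give a feasible ILP solution of the same weight. Conversely, a feasible ILP solution yields labels for each clique, obtained by distributing the $q_K$ greedily, and these labels satisfy the conditions both at clique vertices and at $T$. There are at most $2^t$ variables, and all values are bounded by $n$. Theorem~\ref{theoremilp} then gives time $2^t\cdot (2^t)^{\mathcal{O}(2^t)}\cdot n^{\mathcal{O}(1)}$, which is FPT.

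The main obstacle is the aggregation step, namely that replacing the individual $q_K$ by the single variable $x_i$ loses nothing. This holds because every $u\in T$ adjacent to $C_i$ sees all of $C_i$, so only the sum matters. Each range $[0,(|K|-\ell(K))/2]$ is an integer interval, and a sum of integer intervals is an integer interval, so every integer $x_i\in[0,m_i]$ is achievable. The actual labelling must be built per clique, with $\ell(K)+2q_K$ as the weight of $K$ and $(|K|+\ell(K))/2+q_K$ vertices of $K$ labelled $1$. Labelling $2x_i$ extra vertices of $C_i$ arbitrarily, without respecting the individual cliques, would not be correct.
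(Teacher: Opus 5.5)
Your outline is the paper's argument: clique sets grouped by $T$-neighbourhood, guessing $P$, Lemma~\ref{parity}, aggregation $x_i=\sum_{K\in C_i}q_K$, and an ILP with at most $2^t$ variables. Two of your adjustments are genuine improvements: you add the fixed weight of $T$ to the objective, and you rebuild the labelling clique by clique with $(|K|+\ell(K))/2+q_K$ ones in $K$, rather than $2x_i$ arbitrary ones in $C_i$. There is, however, a real gap, and the paper's formulation has it too. Lemma~\ref{parity} does not give the least \emph{feasible} weight of $K$. It only combines $w(K)\ge 1-y_i$ with parity and ignores $w(K)\ge -|K|$. You list $|w(K)|\le |K|$ but enforce only the upper half, by rejecting when $\ell(K)>|K|$. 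Whenever $y_i\ge |K|+3$, the lemma returns $\ell(K)\le -|K|-2$. Then the values $\ell(K)+2q_K$ with $0\le q_K\le (|K|-\ell(K))/2$ include unattainable weights, the ILP can undershoot the optimum, and your count $(|K|+\ell(K))/2+q_K$ of label-$1$ vertices can be negative. The direction from a signed dominating function to an ILP solution survives. The direction from an ILP solution to a labelling fails.

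A concrete instance: let $T=\{u_1,\dots,u_4\}$ be independent, attach three pendant vertices to each $u_j$, and add a vertex $v$ adjacent to all of $T$. No edge is a twin edge and each $u_j$ is needed to cover its pendant edges, so $T$ is the only twin cover of size at most $4$. Every guess other than $P=T$ is rejected, because a pendant of a $u_j\notin P$ has $|K|+y_j=0$. For $P=T$, the pendant clique sets have $L_j=3$ and $m_j=0$, while $\ell(\{v\})=1-4=-3$ and $m_0=2$. The constraint at each $u_j$ reads $1+3+(-3+2x_0)\ge 1$, so $x_0=0$ is optimal and the algorithm reports weight $4+12-3=13$. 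The true optimum is $15$: Lemma~\ref{obs: O1} forces every vertex other than $v$ to have label $1$, and $v$ has label at least $-1$. The fix is to set $\ell(K)=\max\{-|K|,\ell_{\mathrm{par}}(K)\}$, where $\ell_{\mathrm{par}}(K)$ is the value from Lemma~\ref{parity}. Since $-|K|\equiv |K|\pmod 2$, the attainable weights of $K$ are then exactly $\ell(K),\ell(K)+2,\dots,|K|$. Your interval-sum argument, the ILP, and the per-clique reconstruction then go through unchanged.
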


%% file: 7_conclusion.tex
\section{Conclusion}
In this paper, we studied the complexity aspects of the SD problem. We proved that the problem remains NP-complete even when restricted to split graphs. We then showed that the problem is W[1]-hard parameterized by feedback vertex set number. Finally, we presented FPT algorithms for the parameters neighbourhood diversity and twin cover number.
\medskip

\noindent We conclude the paper with the following open questions on the SD problem.
\begin{enumerate}
    \item What is the complexity of the problem on interval graphs, strongly chordal graphs, and block graphs?
    \item What is the parameterized complexity of the problem parameterized by distance to disjoint paths, distance to cluster, pathwidth, and modular-width?
    \item Does there exist polynomial kernels for the problem parameterized by vertex cover number, distance to clique, and max-leaf number?
\end{enumerate}